\newenvironment{protocol}{\vspace{-1em}\begin{framed}
}{
\vspace{-3ex}\end{framed}\vspace{-1em}}
\renewcommand{\H}{{\cal H}}
\newcommand{\ket}[1]{| #1 \rangle}
\newcommand{\bra}[1]{\langle #1 |}
\newcommand{\braket}[2]{\langle #1|#2 \rangle}
\newcommand{\ketbra}[2]{|#1\rangle \!\langle #2|}
\newcommand{\proj}[1]{\ketbra{#1}{#1}}
\newcommand{\zero}{\ket{0}}
\newcommand{\one}{\ket{1}}
\newcommand{\id}{\mathbb{I}}
\newcommand{\set}[1]{\left \{#1\right \}}
\newcommand{\Set}[2]{\left\{ #1 :\, #2\right\}}
\newcommand{\C}{\mathbb{C}}
\newcommand{\N}{\mathbb{N}}
\newcommand{\eps}{\varepsilon}
\newcommand{\PC}{{\sigma}}  
\newcommand{\gh}{\mathit{GH}}    
\newcommand{\ghr}{\mathit{GH_{\eps}}}    
\newcommand{\ghq}{\mathit{GH^Q}}    
\newcommand{\ghqr}{\mathit{GH_{\eps}^Q}}    
\newcommand{\PV}{{\sf PV}}
\newcommand{\PVqubit}{\PV_{\text{\rm\tiny \!qubit}}^f}
\newcommand{\pos}{\text{$po\hspace{-0.1ex}s$}}
\newcommand{\comm}[1]{{}}
\definecolor{darkgreen}{rgb}{0,0.6,0}
\newcommand{\accept}{\sf ACCEPT}
\newcommand{\reject}{\sf REJECT}
\newcommand{\Ltwo}{\mathrm{L}_{(2)}}
\begin{document}


\title{The Garden-Hose Model}

\author{
Harry~Buhrman\inst{1,2}
\and
Serge Fehr\inst{1}
\and
Christian~Schaffner\inst{2,1}
and
Florian Speelman\inst{1}
}

\institute{ 
Centrum Wiskunde \& Informatica (CWI), The Netherlands \and
Institute for Logic, Language and Computation (ILLC), University of Amsterdam, The Netherlands
}

\date{\today}

\maketitle

{ \vspace{-5mm} \small
\center
Email: \email{h.buhrman@cwi.nl}, \email{s.fehr@cwi.nl}, \email{c.schaffner@uva.nl}, \email{f.speelman@cwi.nl}\\
}

\begin{abstract}

We define a new model of communication complexity, called the {\em
garden-hose model}. Informally, the garden-hose complexity of a
function $f:\set{0,1}^n \times \set{0,1}^n \to \set{0,1}$ is given
by the minimal number of water pipes that need to be shared between
two parties, Alice and Bob, in order for them to compute
the function $f$ as follows: Alice connects her ends of the pipes in a
way that is determined solely by her input $x \in \set{0,1}^n$ and,
similarly, Bob connects his ends of the pipes in a way that is
determined solely by his input $y \in \set{0,1}^n$. Alice turns
on the water tap that she also connected to one of the pipes. Then,
the water comes out on Alice's or Bob's side depending on the
function value $f(x,y)$.

We prove almost-linear lower bounds on the garden-hose complexity for concrete functions like inner product, majority, and equality, and we show the existence of functions with exponential garden-hose complexity. Furthermore, we show a connection to classical complexity theory by proving that all functions computable in log-space have polynomial garden-hose complexity. 

We consider a {\em randomized} variant of the garden-hose complexity,
where Alice and Bob hold pre-shared randomness, and a {\em quantum}
variant, where Alice and Bob hold pre-shared quantum entanglement, and
we show that the randomized garden-hose complexity is within a
polynomial factor of the deterministic garden-hose
complexity. Examples of (partial) functions are given where the
quantum garden-hose complexity is logarithmic in $n$ while the classical
garden-hose complexity can be lower bounded by $n^c$ for constant $c>0$.

Finally, we show an interesting connection between the garden-hose model and the (in)security of a certain class of {\em quantum position-verification} schemes. 
\end{abstract}


\pagestyle{plain}
\section{Introduction}

\paragraph{\bf The garden-hose model}
On a beautiful sunny day, Alice and Bob relax in their neighboring
gardens. It happens that their two gardens share $s$ water
pipes, labeled by the numbers $1,2,\ldots,s$. Each of these water
pipes has one loose end in Alice's and the other loose end in Bob's
garden. For the fun of it, Alice and Bob play the following
game. Alice uses pieces of hose to locally connect some of the pipe
ends that are in her garden with each other. For example, she might
connect pipe $2$ with pipe $5$, pipe $4$ with pipe $9$, {\em
  etc}. Similarly, Bob locally connects some of the pipe ends that are
in his garden; for instance pipe $1$ with pipe $4$, {\em etc}. We
note that no T-pieces (nor more complicated constructions),
which connect two or more pipes to one (or vice versa) are
allowed. Finally, Alice connects a water tap to one of her ends of the
pipes, e.g., to pipe $3$ and she turns on the tap. Alice and Bob
observe which of the two gardens gets sprinkled. It is easy to see
that since Alice and Bob only use simple one-to-one connections, there
is no ``deadlock'' possible and the water will indeed eventually come
out on one of the two sides. Which side it is obviously depends on
the respective local connections.

Now, say that Alice connects her ends of the pipes (and the tap) not
in a {\em fixed} way, but her choice of connections depends on a
private bit string $x \in \set{0,1}^n$; for different strings $x$ and
$x'$, she may connect her ends of the pipes differently. Similarly,
Bob's choice which pipes to connect depends on a private bit string $y
\in \set{0,1}^n$. These strategies then specify a function
$f:\set{0,1}^n \times \set{0,1}^n \to \set{0,1}$ as follows: $f(x,y)$
is defined to be $0$ if, using the connections determined by $x$ and
$y$ respectively, the water ends up on Alice's side, and $f(x,y)$ is
$1$ if the water ends up on Bob's side.

Switching the point of view, we can now take an {\em arbitrary}
Boolean function $f:\set{0,1}^n \times \set{0,1}^n \to \set{0,1}$ and
ask: How can $f$ be computed in the garden-hose
model? How do Alice and Bob have to choose their local connections, and how
many water pipes are necessary for computing $f$ in the garden-hose
model?  We stress that Alice's choice for which pipes to connect may
only depend on $x$ but not on $y$, and vice versa; this is what makes
the above questions non-trivial.

In this paper, we introduce and put forward the notion of {\em
  garden-hose complexity}. For a Boolean function $f:\set{0,1}^n \times
\set{0,1}^n \to \set{0,1}$, the garden-hose complexity $\gh(f)$ of $f$
is defined to be the minimal number $s$ of water pipes needed to compute
$f$ in the garden-hose model. It is not too hard to see that $\gh(f)$
is well defined (and finite) for any function $f:\set{0,1}^n \times
\set{0,1}^n \to \set{0,1}$. 

This new complexity notion opens up a large spectrum of natural
problems and questions.  What is the (asymptotic or exact) garden-hose
complexity of natural functions, like equality, inner product {\em
  etc}.? How hard is it to compute the garden-hose complexity in
general? How is the garden-hose complexity related to other complexity
measures? What is the impact of randomness, or entanglement?
Some of these questions we answer in this work; others
remain open.

\paragraph{\bf Lower and upper bounds}

We show a near-linear $\Omega(n/\log(n))$ lower bound on the
garden-hose complexity $\gh(f)$ for a natural class of functions
$f:\set{0,1}^n \times \set{0,1}^n \to \set{0,1}$. This class of
functions includes the mod-$2$ inner-product function, the equality
function, and the majority function. For the former two, this bound is
rather tight, in that for these two functions we also show a linear
upper bound. For the majority function, the best upper bound we know
is quadratic.  Recently, Margalit and Matsliah improved our upper
bound for the equality function with the help of the IBM
SAT-Solver~\cite{MM12} to approximately $1.448 n$, and the question of
how many water pipes are necessary to compute the equality function in
the garden-hose model featured as April 2012's ``Ponder This'' puzzle
on the IBM website\footnote{ \url{http://ibm.co/I7yvMz}}. The {\em
  exact} garden-hose complexity of the equality function is still
unknown, though; let alone of other functions.

By using a counting argument, we show the existence of functions with
{\em exponential} garden-hose complexity, but so far, no such 
function is known explicitly.

\paragraph{\bf Connections to other complexity notions}

We show that every function $f:\set{0,1}^n \times \set{0,1}^n \to \set{0,1}$ that is {\em log-space computable} has polynomial garden-hose complexity. And, vice versa, we show that every function with polynomial garden-hose complexity is, up to local pre-processing, log-space computable. 
As a consequence, we obtain that the set of functions with polynomial garden-hose complexity is exactly given by the functions that can be computed by arbitrary local pre-processing followed by a log-space computation. 

We also point out a connection to communication complexity by observing that, for any function $f:\set{0,1}^n \times \set{0,1}^n \to \set{0,1}$, the {\em one-way communication complexity} of $f$ is a lower bound on $\gh(f) \log(\gh(f))$.

\paragraph{\bf Randomized and quantum garden-hose complexity}

We consider the following natural variants of the garden-hose
model. In the {\em randomized} garden-hose model, Alice and Bob
additionally share a uniformly random string $r$, and the water is
allowed to come out on the wrong side with small probability $\eps$.
Similarly, in the {\em quantum} garden-hose model, Alice and Bob
additionally hold an arbitrary entangled quantum state and their
wiring strategies can depend on the outcomes of measuring this state
before playing the garden-hose game. Again, the water is allowed to
come out on the wrong side with small probability $\eps$. 
Based on the observed connections of the garden-hose complexity to log-space computation and to one-way communication complexity, we can show that the resulting notion of {\em randomized} garden-hose complexity $\ghr(f)$ is polynomially related to $\gh(f)$. For the resulting notion of {\em quantum} garden-hose complexity
$\ghqr(f)$, we can show a separation (for a partial function) from $\ghr(f)$.

\paragraph{\bf Application to quantum position-verification}

Finally, we show an interesting connection between the garden-hose model and the (in)security of a certain class of {\em quantum position-verification} schemes. The goal of position-verification is to verify the geographical position $\pos$ of a prover $P$ by means of sending messages to $P$ and measuring the time it takes $P$ to reply. Position-verification with security against collusion attacks, where different attacking parties collaborate in order to try to fool the verifiers, was shown to be impossible in the classical setting by~\cite{CGMO09}, and in the quantum setting by~\cite{Buhrman2011}, if there is no restriction put upon the attackers. In the quantum setting, this raises the question whether there exist schemes that are secure in case the attackers' quantum capabilities are limited. 

We consider a simple and natural class of
quantum position-verification schemes; each scheme $\PVqubit$ in the class is specified by a Boolean function
$f:\set{0,1}^n \times \set{0,1}^n \to \set{0,1}$. These schemes may
have the desirable property that the more classical resources the
honest users use to faithfully execute the scheme, the more quantum
resources the adversary needs in order to break it. It turns out that there is a one-to-one correspondence between the garden-hose game and a certain class of attacks on these schemes, where the attackers teleport a qubit back and forth using a supply of EPR pairs. As an immediate consequence, the (quantum) garden-hose complexity of $f$ gives an upper bound on
the number of EPR pairs the attackers need in order to break the
scheme $\PVqubit$.  As a corollary, we obtain the following
interesting connection between proving the security of quantum
protocols and classical complexity theory: If there is an $f$ in
$\mathrm{P}$ such that there is no way of attacking scheme $\PVqubit$
using a polynomial number of EPR pairs, then $\mathrm{P} \neq
\mathrm{L}$.  Vice versa, our approach may lead to practical
secure quantum position-verification schemes whose security is based
on classical complexity-theoretical assumptions such as $\mathrm{P}$
is different from $\mathrm{L}$.  However, so far it is still
unclear whether the garden-hose complexity by any means gives a {\em
  lower bound} on the number of EPR pairs needed; this remains to be
further investigated.

\section{The Garden-Hose Model} \label{sec:gardenhose}
\subsection{Definition}
Alice and Bob get $n$-bit input strings $x$ and $y$, respectively.
Their goal is to ``compute'' an agreed-upon Boolean function $f:
\set{0,1}^n \times \set{0,1}^n \to \set{0,1}$ on these inputs, in the
following way. Alice and Bob have $s$ water pipes between
them, and, depending on their respective classical inputs $x$ and $y$, they
connect (some of) their ends of the pipes with pieces of hose.  
Additionally, Alice connects a water tap to one of the pipes. They succeed in computing $f$ in the garden-hose model, if the water comes out on Alice's side whenever $f(x,y) = 0$, and the water comes out on Bob's side whenever $f(x,y) = 1$.  Note that it
does not matter out of which pipe the water flows, only on which side
it flows.  What makes the game non-trivial is
that Alice and Bob must do their ``plumbing'' based on their local
input only, and they are not allowed to communicate. We refer to
Figure~\ref{fig:xor} 
for an illustration of computing the XOR function
in the garden-hose model.

\begin{figure}
\center
\includegraphics[width=0.4\textwidth]{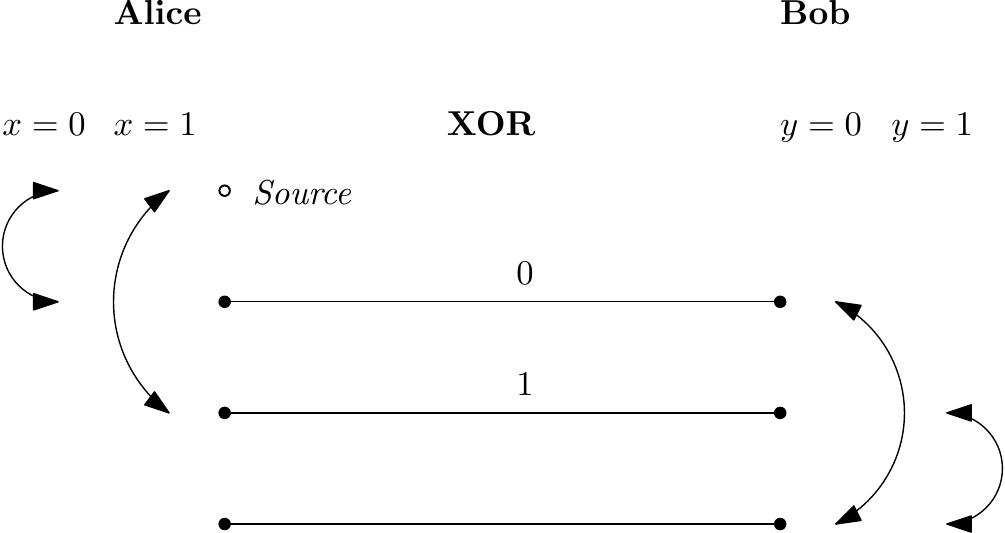}
\caption{Computing the XOR function in the garden-hose model using
  three water pipes. If Alice's input bit $x$ is $0$, she connects the
  water tap to the first water pipe labeled ``0''. In case $x=1$, she
  connects the tap to the second pipe labeled ``1''.  \label{fig:xor}}
\end{figure}

We formalize the above description of the garden-hose game, given in
terms of pipes and hoses {\em etc}., by means of rigorous graph-theoretic
terminology.  However, we feel that the above terminology captures the
notion of a garden-hose game very well, and thus we sometimes use the
above ``watery'' terminology.  We start with a balanced bi-partite
graph $(A\cup B,E)$ which is 1-regular and where the cardinality of
$A$ and $B$ is $|A|=|B|=s$, for an arbitrary large $s \in \N$.  We
slightly abuse notation and denote both the vertices in $A$ and in $B$
by the integers $1,\ldots,s$.  If we need to distinguish $i \in A$
from $i \in B$, we use the notation $i^A$ and $i^B$.  We may assume
that $E$ consists of the edges that connect $i \in A$ with $i \in B$
for every $i \in \set{1,\ldots,s}$, i.e., $E = \Set{\set{i^A,i^B}}{1
  \leq i \leq s}$.  These edges in $E$ are the {\em pipes} in the
above terminology.  We now extend the graph to $(A_\circ \cup B,E)$ by
adding a vertex $0$ to $A$, resulting in $A_\circ = A \cup \set{0}$.
This vertex corresponds to the {\em water tap}, which Alice can
connect to one of the pipes.  Given a Boolean function $f: \set{0,1}^n
\times \set{0,1}^n \to \set{0,1}$, consider two functions
$E_{A_\circ}$ and $E_B$; both take as input a string in $\set{0,1}^n$
and output a set of edges (without self loops). For any $x,y \in
\set{0,1}^n$, $E_{A_\circ}(x)$ is a set of edges on the vertices
$A_\circ$ and $E_B(x)$ is a set of edges on the vertices $B$, so that
the resulting graphs $(A_\circ,E_{A_\circ}(x))$ and $(B,E_B(y))$ have
maximum degree at most $1$.  $E_{A_\circ}(x)$ consists of the {\em
  connections} among the pipes (and the tap) on Alice's side (on input
$x$), and correspondingly for $E_B(y)$.  For any $x,y \in
\set{0,1}^n$, we define the graph $G(x,y) = (A_\circ \cup B,E \cup
E_{A_\circ}(x) \cup E_B(y))$ by adding the edges $E_{A_\circ}(x)$ and
$E_B(y)$ to $E$.  $G(x,y)$ consists of the pipes with the connections
added by Alice and Bob.
Note that the vertex $0 \in A_\circ$ has degree at most $1$, and the
graph $G(x,y)$ has maximum degree at most two $2$; it follows that the
maximal path $\pi(x,y)$ that starts at the vertex $0 \in A_\circ$ is
uniquely determined. $\pi(x,y)$ represents the flow of the water, and
the endpoint of $\pi(x,y)$ determines whether the water comes out on Alice
or on Bob's side (depending on whether the final vertex is in $A_\circ$ or in $B$).

\begin{definition}
  A \emph{garden-hose game} is given by a graph function $G: (x,y)
  \mapsto G(x,y)$ as described above.  The number of pipes $s$ is
  called the \emph{size} of $G$, and is denoted as $s(G)$.  A
  garden-hose game $G$ is said to \emph{compute} a Boolean function $f:
  \set{0,1}^n \times \set{0,1}^n \to \set{0,1}$ if the endpoint of the
  maximal path $\pi(x,y)$ starting at $0$ is in $A_\circ$ whenever
  $f(x,y) = 0$ and in $B$ whenever $f(x,y) = 1$.
\end{definition}

\begin{definition}
The deterministic \emph{garden-hose complexity} of a Boolean function
$f: \set{0,1}^n \times \set{0,1}^n \to \set{0,1}$ is the size $s(G)$
of the smallest garden-hose game $G$ that computes $f$. We denote it
by $\gh(f)$. 
\end{definition}

\subsection{Upper and Lower Bounds} \label{sec:bounds} 
In this section, we present upper and lower bounds on the number of
pipes required to compute some particular (classes of)
functions in the garden-hose model. 
We first give a simple upper bound on $\gh(f)$ which is implicitly
proven in the attack on Scheme II in~\cite{KMS11}. 
\begin{proposition} \label{prop:exp}
  For every Boolean function $f: \set{0,1}^n \times \set{0,1}^n \to
  \set{0,1}$, the garden-hose complexity $\gh(f)$ is at most $2^{n}+1$.
\end{proposition}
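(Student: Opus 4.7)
The plan is to give an explicit garden-hose strategy that uses $2^n + 1$ pipes. I would index $2^n$ of the pipes by Bob's possible inputs $y' \in \set{0,1}^n$, and keep one additional ``spare'' pipe, call it pipe $\star$. On Alice's side I would also add the tap vertex as usual.

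Alice's strategy on input $x$ is very simple: she connects the tap to pipe $y'$ if and only if $f(x,y') = 1$ for some canonical ``target'' choice, or equivalently (and cleaner), I would instead index pipes by Alice's inputs $x' \in \set{0,1}^n$ and describe the strategy the other way around. Concretely, label $2^n$ of the pipes by $x' \in \set{0,1}^n$ and the last pipe by $\star$. Alice, on input $x$, simply connects the tap to pipe $x$ and leaves all other pipe ends loose on her side. Bob, on input $y$, looks at the function $x' \mapsto f(x',y)$ and performs the following pairing on his side: every pipe $x'$ with $f(x',y) = 1$ is left loose, while the pipes $\{x' : f(x',y) = 0\}$ are paired arbitrarily among themselves; if this set has odd cardinality, the leftover pipe is paired with the spare pipe $\star$.

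To check correctness I would trace the water path $\pi(x,y)$. The path starts at the tap, enters pipe $x$, and crosses to Bob's end of pipe $x$. If $f(x,y) = 1$, Bob left this end loose, so the water comes out on Bob's side, as required. If $f(x,y) = 0$, then Bob's matching sends the water across to some other pipe end (either some $x'' \neq x$ with $f(x'',y) = 0$, or the spare pipe $\star$), and the water flows back to Alice's side of that pipe; since Alice only connected the tap to pipe $x$ and left every other end loose, the water comes out on Alice's side. Hence the endpoint of $\pi(x,y)$ lies in $A_\circ$ exactly when $f(x,y) = 0$ and in $B$ exactly when $f(x,y) = 1$, so the garden-hose game computes $f$.

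There is no real obstacle; the only subtlety is the parity issue, which is precisely what forces the ``$+1$'' extra pipe. If $|\{x' : f(x',y) = 0\}|$ happened to always be even, $2^n$ pipes would suffice, but in general one needs the spare pipe $\star$ so that Bob's pairing can always be completed as a valid matching. This yields $\gh(f) \le 2^n + 1$.
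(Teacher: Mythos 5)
Your construction is exactly the paper's: label $2^n$ pipes by Alice's possible inputs plus one reserve pipe, have Alice connect the tap to pipe $x$, and have Bob pair up the pipes in $\{x' : f(x',y)=0\}$ among themselves (using the reserve pipe to fix the parity), leaving the rest loose. The correctness argument via tracing $\pi(x,y)$ is the same as well, so this is a correct proof by essentially the identical route.
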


\begin{proof}
  We identify $\set{0,1}^n$ with $\set{1,\ldots,2^n}$ in the natural
  way. For $s = 2^n+1$ and the resulting bipartite graph $(A_\circ
  \cup B,E)$, we can define $E_{A_\circ}$ and $E_B$ as
  follows. $E_{A_\circ}(x)$ is set to $\set{(0,x)}$, meaning that Alice connects the tap with the pipe labeled by her input $x$. To define $E_B$,
  group the set $Z(y) = \Set{a \in \set{0,1}^n}{f(a,y)=0}$ arbitrarily
  into disjoint pairs $\set{a_1,a_2} \cup \set{a_3,a_4} \cup \ldots
  \cup \set{a_{\ell-1},a_\ell}$ and set $E_B(y) =
  \set{\set{a_1,a_2},\set{a_3,a_4}, \ldots, \set{a_{\ell-1},a_\ell}}$.
  If $\ell = |Z(y)|$ is odd so that the decomposition into pairs
  results in a left-over $\set{a_\ell}$, then $a_\ell$ is connected
  with the ``reserve'' pipe labeled by $2^n+1$.

By construction, if $x \in Z(y)$ then $x = a_i$ for some $i$, and thus
pipe $x = a_i$ is connected on Bob's side with pipe $a_{i-1}$ or
$a_{i+1}$, depending on the parity of $i$, or with the ``reserve''
pipe, and thus $\pi(x,y)$ is of the form $\pi(x,y) =
(0,x^A,x^B,v^B,v^A)$, ending in $A_\circ$.  On the other hand, if $x
\not\in Z(y)$, then pipe $x$ is not connected on Bob's side, and thus $\pi(x,y) =
(0,x^A,x^B)$, ending in $B$. This proves the claim.  
\qed \end{proof}
We notice that we can extend this proof to show that the garden-hose complexity
$\gh(f)$ is at most $2^{D(f)+1} - 1$, where $D(f)$ is the
deterministic communication complexity
of $f$. See Appendix~\ref{sec:ccupperbound} for a sketch of the method.

\begin{definition}\label{def:injective} We call a function $f$ \emph{injective for Alice}, if for every two different inputs $x$ and $x'$ there exists $y$ such
  that $f(x,y) \neq f(x',y)$. We define \emph{injective for Bob} in an
  analogous way: for every $y \neq y'$, there exists $x$ such that
  $f(x,y) \neq f(x,y')$ holds.
\end{definition}

\begin{proposition}
  If $f$ is injective for Bob or $f$ is injective for Alice,
  then\footnote{All logarithms in this paper are with respect to
  base 2.}  $$\gh(f) \log(\gh(f)) \geq n \, .$$
\end{proposition}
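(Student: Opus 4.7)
The plan is to exploit the following simple observation: if $f$ is injective for one of the players, then that player's garden-hose strategy, which depends only on their own input, must itself be injective in that input. Counting the number of possible strategies with $s$ pipes then yields the bound.

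Without loss of generality I would assume $f$ is injective for Alice; the injective-for-Bob case is entirely symmetric, using Bob's $s$ vertices instead of Alice's $s+1$. Suppose, toward contradiction, that $E_{A_\circ}(x) = E_{A_\circ}(x')$ for some $x \neq x'$. Then for every $y$ the graphs $G(x,y)$ and $G(x',y)$ coincide, so the maximal paths $\pi(x,y)$ and $\pi(x',y)$ terminate at the same vertex and therefore $f(x,y) = f(x',y)$ for every $y$, contradicting the existence of a distinguishing $y$. Hence the map $x \mapsto E_{A_\circ}(x)$ is injective on $\set{0,1}^n$, which gives
\[
2^n \;\leq\; \bigl|\{E_{A_\circ}(x):x\in\set{0,1}^n\}\bigr| \;\leq\; I(s+1)\,,
\]
where $s = \gh(f)$ and $I(m)$ denotes the number of matchings on an $m$-element set (equivalently, the number of involutions on $m$ elements).

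It then suffices to show that $\log_2 I(s+1) \leq s \log s$, which I would establish by combining the standard combinatorial bound $I(m) \leq m^{m/2}$ with the arithmetic inequality $(s+1)\log(s+1) \leq 2s\log s$, routine for $s\geq 3$, and checking the remaining small cases directly from $I(2)=2$ and $I(3)=4=2^{2}$. The main delicate point is the tightness of this counting step: the asymptotic estimate $m^{m/2}$ on the number of involutions only barely delivers the claimed inequality, so the boundary value $s = 2$ needs direct verification, and the degenerate case $s = 1$ has to be handled separately since the recursion on matchings is not yet slack there.
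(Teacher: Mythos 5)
Your proof is correct and uses essentially the same argument as the paper: a pigeonhole/counting step showing that injectivity for a player forces the map from that player's input to their local wiring to be injective, so the number of possible wirings with $s$ pipes must be at least $2^n$. The only real difference is that you work on Alice's side with the sharper matching count $I(s+1)\le (s+1)^{(s+1)/2}$, while the paper works on Bob's side with the cruder bound $s^s$ and asserts the Alice case is ``the same''; your extra care is actually what is needed to land exactly on $s\log s\ge n$ (rather than $(s+1)\log(s+1)\ge n$) on Alice's side, and your separate treatment of small $s$ is harmless since for $n\ge 2$ injectivity already forces $s\ge 2$ (with one pipe Alice has only two possible wirings).
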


\begin{proof}
  We give the proof when $f$ is injective for Bob. The proof for the
  case where $f$ is injective for Alice is the same.  Consider a
  garden-hose game $G$ that computes $f$. Let $s$ be its size
  $s(G)$. Since, on Bob's side, every pipe is connected to at most one other pipe, there are at most $s^s = 2^{s \log(s)}$
  possible choices for $E_B(y)$, i.e., the set of connections on Bob's side. Thus, if $2^{s \log(s)} < 2^n$, it
  follows from the pigeonhole principle that there must exist $y$ and
  $y'$ in $\set{0,1}^n$ for which $E_B(y) = E_B(y')$, and thus for
  which $G(x,y) = G(x,y')$ for all $x \in \set{0,1}^n$. But this
  cannot be since $G$ computes $f$ and $f(x,y)\neq f(x,y')$ for some
  $x$ due to the injectivity for Bob. Thus, $2^{s \log(s)} \geq 2^n$
  which implies the claim.  \qed
\end{proof}

We can use this result to obtain an almost linear lower bound for several functions that are often
studied in communication complexity settings such as:
\begin{itemize}
\item Bitwise inner product: $\mathrm{IP}(x,y)=\sum_i x_i y_i \pmod{2}$
\item Equality: $\mathrm{EQ}(x,y)= 1$ if and only if $x=y$
\item Majority: $\mathrm{MAJ}(x,y)=1$ if and only if $\sum_i x_i y_i \geq \lceil \frac{n}{2} \rceil$
\end{itemize}
The first two of these functions are injective for both Alice and Bob,
while majority is injective for inputs of Hamming weight at least
$n/2$, giving us the following corollary.
\begin{corollary}
The functions bitwise inner product, equality and majority have
garden-hose complexity in $\Omega(\frac{n}{\log (n)})$.
\end{corollary}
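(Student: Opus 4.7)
The plan is to invoke the preceding proposition separately for each of the three functions, after verifying that each is injective for Alice (equivalently Bob), possibly on a sub-domain of size $2^{\Omega(n)}$. For $\mathrm{EQ}$ and $\mathrm{IP}$ injectivity holds on the full domain $\{0,1\}^n$ and the bound follows immediately. For $\mathrm{MAJ}$ it does not (for any $x$ of small Hamming weight, $\mathrm{MAJ}(x,\cdot)$ is identically $0$, so many distinct $x$'s are indistinguishable), so I would restrict to high-weight inputs and re-run the pigeonhole step from the proof of the proposition with a correspondingly smaller domain.

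For $\mathrm{EQ}$, given distinct $x \neq x'$, the choice $y = x$ separates them since $\mathrm{EQ}(x,x) = 1 \neq 0 = \mathrm{EQ}(x',x)$. For $\mathrm{IP}$, pick any coordinate $i$ on which $x$ and $x'$ differ and take $y$ to be the $i$-th standard basis vector; then $\mathrm{IP}(x,y) = x_i \neq x'_i = \mathrm{IP}(x',y)$. Both arguments are symmetric in the two inputs, so these two functions are injective for both Alice and Bob on the full domain, and the previous proposition immediately gives $\gh(f) \log \gh(f) \geq n$.

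For $\mathrm{MAJ}$, I would restrict the inputs to the set $S = \{x \in \{0,1\}^n : \wt(x) \geq \lceil n/2 \rceil\}$, whose cardinality is at least $2^{n-1}$, and show injectivity on $S$. Given distinct $x, x' \in S$, WLOG there is a coordinate where $x_j = 1$ and $x'_j = 0$. Parametrise the overlap by $c = |\{j : x_j = x'_j = 1\}|$, $d = |\{j : x_j = 1, x'_j = 0\}| \geq 1$, and $e = |\{j : x_j = 0, x'_j = 1\}|$. The weight bounds $c + d \geq \lceil n/2 \rceil$ and $c + e \geq \lceil n/2 \rceil$ force $d \leq \lfloor n/2 \rfloor$ and $c \geq \lceil n/2 \rceil - d \geq 0$, so it is feasible to construct $y$ that is $1$ on all $d$-positions, $0$ on the $e$-positions and on the common-zero positions, and $1$ on exactly $k := \lceil n/2 \rceil - d$ of the $c$ common-one positions. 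A direct computation gives $\langle x, y\rangle = k + d = \lceil n/2 \rceil$ and $\langle x', y\rangle = k < \lceil n/2 \rceil$, so $\mathrm{MAJ}(x,y) \neq \mathrm{MAJ}(x',y)$.

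Plugging this into the pigeonhole step in the proof of the previous proposition, with $|S| \geq 2^{n-1}$ in place of $2^n$, yields $\gh(\mathrm{MAJ}) \log \gh(\mathrm{MAJ}) \geq n - 1$, hence still $\gh(\mathrm{MAJ}) \in \Omega(n/\log n)$. The main obstacle, and the only place where genuine combinatorial work is needed, is the $\mathrm{MAJ}$ case: one must both select a large enough sub-domain to recover injectivity and explicitly construct the separating $y$; equality and inner product are essentially immediate.
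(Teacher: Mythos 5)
Your proposal is correct and follows exactly the paper's route: apply the injectivity proposition directly for equality and inner product, and handle majority by restricting to inputs of Hamming weight at least $\lceil n/2\rceil$ and re-running the pigeonhole count on that sub-domain of size $\geq 2^{n-1}$. The paper only asserts these injectivity facts in one sentence; your explicit separating inputs $y$ (in particular the $c,d,e$ construction for majority) correctly fill in the details it omits.
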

By considering the water pipes that actually get wet, one can show a
lower bound of $n$ pipes for equality~\cite{Pietrzak11}.
On the other hand, we can show upper bounds that are linear for the
bitwise inner product and equality, and quadratic in case of
majority. We refer to~\cite{Speelman11} for the proof of the following
proposition.
\begin{proposition} \label{prop:upperbounds}
  In the garden-hose model, the equality function can be computed with
  $3n+1$ pipes, the bitwise inner product with $4n+1$ pipes and majority
  with $(n+2)^2$ pipes.
\end{proposition}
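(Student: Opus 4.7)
\noindent\emph{Proof proposal.} The plan is to exhibit an explicit garden-hose game achieving each claimed bound. For the \emph{equality} function, I would chain $n$ identical ``bit-check'' gadgets, each using three fresh pipes $\beta_i, \gamma_i, q_i$. Alice, using only $x_i$, routes the incoming pipe-end on her side to $\beta_i$ when $x_i = 0$ and to $\gamma_i$ when $x_i = 1$; Bob, using only $y_i$, connects the ``matching'' branch ($\beta_i$ if $y_i = 0$, $\gamma_i$ if $y_i = 1$) on his side to the output pipe $q_i$ and leaves the other branch dangling. Thus the water advances to the next gadget via $q_i$ iff $x_i = y_i$, and otherwise exits on Bob's side at the dangling end. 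With one extra pipe used to route the tap into the first gadget, the total is $3n + 1$, and the water ends on Alice's side iff every gadget is traversed successfully, i.e., iff $x = y$.

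For the \emph{bitwise inner product}, I would chain $n$ ``conditional-swap'' gadgets, each of which causes the water to traverse an odd number of internal pipes iff $x_i y_i = 1$ and an even number otherwise. Because each pipe traversal flips the side on which the water currently sits while all local wirings preserve it, the parity of the exit side equals $\sum_i x_i y_i \bmod 2 = \mathrm{IP}(x, y)$. The main work is to design a four-pipe gadget whose Alice-part (depending only on $x_i$) and Bob-part (depending only on $y_i$) jointly realise this conditional swap for every one of the four input pairs $(x_i, y_i)$; together with one extra pipe for the tap this gives $4n + 1$.

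For \emph{majority}, the plan is a dynamic-programming style construction on a two-dimensional array of pipes $P_{j, k}$ with $0 \leq j, k \leq n + 1$, totalling $(n + 2)^2$. Intuitively, $P_{j, k}$ represents the state ``after processing the first $j$ bits, exactly $k$ of the indices $i \leq j$ satisfied $x_i = y_i = 1$''. At stage $j$, Alice's wiring (depending only on $x_j$) and Bob's wiring (depending only on $y_j$) jointly implement the transition that sends water at $P_{j-1, k}$ either to $P_{j, k}$ (when $x_j y_j = 0$) or to $P_{j, k+1}$ (when $x_j y_j = 1$). A final layer of connections after column $n$ routes each $P_{n, k}$ with $k \geq \lceil n/2 \rceil$ to exit on Bob's side and the remaining ones to exit on Alice's side, so that the water exits on Bob's side iff $\mathrm{MAJ}(x, y) = 1$.

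The main obstacle will be the majority construction: the stage-$j$ wiring must act correctly in parallel on every row $k$ of the grid while Alice sees only $x_j$ and Bob sees only $y_j$, amounting to realising an AND-gadget for $(x_j, y_j)$ simultaneously across the whole column within the $(n+2)^2$-pipe budget and while respecting the degree-one constraint. The equality and inner-product constructions, by contrast, reduce to designing a single correct local gadget and concatenating $n$ copies, which is much more straightforward.
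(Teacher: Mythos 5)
Your overall strategy---explicit gadget-chaining constructions---is the natural one (the paper itself only refers to \cite{Speelman11} for these bounds), but as written the argument has genuine gaps. The most concrete one is in the equality construction: you conclude that ``the water ends on Alice's side iff $x=y$,'' whereas the paper's definition of computing $f$ requires the water to end on \emph{Bob's} side exactly when $f(x,y)=1$, and $\mathrm{EQ}(x,y)=1$ iff $x=y$. Your gadget therefore computes $\lnot\mathrm{EQ}$: every single-bit mismatch spills the water at a dangling end on Bob's side, and a full match returns it to Alice. This is not a harmless relabelling, because complementation is not free in the garden-hose model (the tap is anchored on Alice's side). The actual content of the $3n+1$ construction is precisely the part you skip: arranging that a mismatch sends the water back through a pipe whose Alice-end is left open, while a match forwards it to the next gadget, all with only three pipes per bit and respecting the degree-one constraint at every vertex. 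Relatedly, your ``$+1$ for routing the tap into the first gadget'' is a miscount: the tap is the vertex $0\in A_\circ$ and is joined directly to a pipe-end by an edge of $E_{A_\circ}(x)$, so it consumes no pipe; the genuine $+1$ has to come from the endgame of the chain.

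For the inner product and majority you reduce the proof to designing gadgets that you never exhibit. For IP the plan is sound and completable: you need a gadget that acts correctly on a two-state ``running parity,'' i.e.\ one that accepts the water arriving on either side and forwards it with a conditional swap determined by $x_iy_i$; but note that your phrasing ``each pipe traversal flips the side \dots\ the parity of the exit side equals $\sum_i x_iy_i$'' is not yet an argument, since \emph{every} traversal flips the side and it is the parity of the total number of traversals that you must control. For majority the situation is worse: you yourself identify the crux (implementing the conditional increment across an entire column with roughly one pipe per grid cell, while Alice sees only $x_j$, Bob sees only $y_j$, and every pipe-end has degree at most one) and then leave it unresolved, so nothing is actually established for the $(n+2)^2$ bound. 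To close these gaps you must write down the per-bit wirings explicitly and verify, for all four values of $(x_i,y_i)$, both the water's trajectory and the degree-one constraint.
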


In general, garden-hose protocols can be transformed into (one-way)
communication protocols by Alice sending her connections
$E_{A_\circ}(x)$ to receiver Bob, which will require at most $\gh(f) \log(\gh(f))$
bits of communication. Bob can then locally compute the function by combining
Alice's message with $E_B(y)$ and checking where the water
exits.\footnote{In fact, garden-hose protocols can even be transformed
  into communication protocols in the more restrictive
  \emph{simultaneous-message-passage} model, where Alice and Bob send
  simultaneous messages consisting of their connections
  $E_{A_\circ}(x)$ and $E_{B}(y)$ to the referee who then computes the
  function. The according statements of
  Propositions~\ref{prop:commtogh}, \ref{prop:commtogh_rand} and
  \ref{prop:commtogh_quant} can be derived analogously.} We summarize this observation in the following proposition.
\begin{proposition} \label{prop:commtogh}
Let $D^{1}(f)$ denote the deterministic one-way
communication complexity of $f$. Then, \[D^{1}(f) \leq \gh(f) \log(\gh(f)) \,\text{.}\]
\end{proposition}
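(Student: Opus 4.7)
The plan is to show that an optimal garden-hose protocol can be simulated by a one-way communication protocol in which Alice transmits enough information for Bob to trace the water flow on his own. Concretely, let $G$ be a garden-hose game of size $s=\gh(f)$ computing $f$, with wiring functions $E_{A_\circ}$ and $E_B$. The one-way protocol will have Alice, on input $x$, send a description of her wiring $E_{A_\circ}(x)$ to Bob, who then combines it with his wiring $E_B(y)$ and the fixed pipe edges $E$ to reconstruct $G(x,y)$ locally.

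First I would argue that $E_{A_\circ}(x)$ can be encoded in at most $s\log s$ bits (up to lower-order terms). Since $E_{A_\circ}(x)$ is a partial matching on the $s+1$ vertices of $A_\circ$, one encoding is to list, for each vertex, its matching partner or a symbol indicating that it is unmatched; this needs $(s+1)\lceil\log(s+2)\rceil$ bits. A slightly tighter encoding lists only the (at most $(s+1)/2$) matched pairs, each specified by two labels from $\{0,1,\ldots,s\}$, giving about $(s+1)\log(s+1)$ bits. The bound in the statement should be read as expressing this $s\log s$ asymptotics, which is what the proof yields.

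Second, I would verify that this message is sufficient for Bob. Upon receiving $E_{A_\circ}(x)$, Bob forms the graph $G(x,y)=(A_\circ\cup B,\,E\cup E_{A_\circ}(x)\cup E_B(y))$. As noted in the formalization, every vertex of $G(x,y)$ has degree at most $2$, and the tap vertex $0\in A_\circ$ has degree exactly $1$ (assuming w.l.o.g.\ that Alice always connects the tap), so the maximal path $\pi(x,y)$ starting at $0$ is uniquely determined. Bob simply traces this path and outputs $0$ if its endpoint lies in $A_\circ$ and $1$ if it lies in $B$; by the correctness of $G$, this is exactly $f(x,y)$. Since the only communication is Alice's message, we conclude $D^1(f)\leq\gh(f)\log(\gh(f))$.

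There is no real obstacle here; the proof is essentially a translation. The only point that requires minor care is the encoding of a partial matching on $s+1$ labeled vertices within the claimed bit budget, and — as remarked in the footnote of the paper — the same argument goes through in the simultaneous-message-passing model, because Bob's reconstruction step uses $E_{A_\circ}(x)$ and $E_B(y)$ symmetrically and can equally well be performed by a referee who receives both wirings.
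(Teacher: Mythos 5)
Your proposal is correct and follows essentially the same route as the paper: Alice sends an encoding of her wiring $E_{A_\circ}(x)$ in roughly $\gh(f)\log(\gh(f))$ bits, and Bob reconstructs $G(x,y)$ and traces the unique maximal path from the tap. Your remarks on the exact bit-count of encoding the partial matching and on the simultaneous-message-passing variant match the paper's (implicit) accounting and its footnote, so there is nothing to add.
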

As a consequence, lower bounds on the communication complexity carry
over to the garden-hose complexity (up to logarithmic factors).
Notice that this technique will never give lower bounds that are
better than linear, as problem in communication complexity can always
be solved by sending the entire input to the other party. It is an
interesting open problem to show super-linear lower bounds in the
garden-hose model, e.g.~for the majority function.

\begin{proposition}\label{prop:expbound}\ \\
  There exist functions $f:\set{0,1}^n \times \set{0,1}^n \to
  \set{0,1}$ for which $\gh(f)$ is exponential.
\end{proposition}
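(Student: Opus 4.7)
The plan is a standard counting (or ``volume'') argument: upper bound the number of distinct Boolean functions computable by garden-hose games of size $s$, compare with the total number $2^{2^{2n}}$ of Boolean functions $f:\set{0,1}^n \times \set{0,1}^n \to \set{0,1}$, and solve for the threshold $s$ below which some function must be inexpressible.

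First I would count strategies. For a garden-hose game of size $s$, Alice's strategy is a function $E_{A_\circ}:\set{0,1}^n \to M_{s+1}$, where $M_{s+1}$ denotes the set of partial matchings on the $s+1$ vertices of $A_\circ$. Similarly, Bob's strategy is a function $E_B:\set{0,1}^n \to M_s$. A crude but sufficient bound is $|M_{s+1}| \leq (s+1)^{s+1}$ (each vertex chooses at most one neighbour, encoded by one of $s+1$ labels or ``unmatched''), and likewise $|M_s| \leq s^s$. Hence the number of distinct pairs $(E_{A_\circ},E_B)$, and thus the number of garden-hose games of size $s$, is at most
\[
\bigl((s+1)^{s+1}\bigr)^{2^n} \cdot \bigl(s^s\bigr)^{2^n} \;\leq\; 2^{O(s \log s \cdot 2^n)}.
\]

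Next, I would compare this with $2^{2^{2n}}$, the number of Boolean functions on $\set{0,1}^n \times \set{0,1}^n$. If every such function had $\gh(f) \leq s$, then the above upper bound would need to be at least $2^{2^{2n}}$, forcing $s \log s \geq \Omega(2^n)$, i.e.\ $s = \Omega(2^n / n)$. Contrapositively, for $s$ smaller than this threshold there must exist a function $f$ that is not computed by any garden-hose game of size $s$, yielding $\gh(f) = \Omega(2^n / n)$, which is exponential in $n$.

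There is no real obstacle here; the only care needed is to make the bound on the number of matchings on $s$ vertices explicit and loose enough to be obviously correct (e.g.\ encoding each partial matching by the neighbour-or-$\bot$ of each vertex, giving $(s+1)^s$ or $s^s$), while still being tight enough that the resulting $s$ is exponential in $n$ after taking logarithms. Any polynomial-in-$s$ correction to the per-input count is absorbed into the $s \log s$ factor and does not affect the asymptotic conclusion.
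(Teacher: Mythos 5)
Your proposal is correct and follows essentially the same counting argument as the paper: bound the number of garden-hose games of size $s$ by $\bigl((s+1)^{s+1}\bigr)^{2^n}\cdot\bigl(s^s\bigr)^{2^n}$, compare with the $2^{2^{2n}}$ Boolean functions, and conclude $(s+1)\log(s+1) \geq 2^{n-1}$, hence $s$ exponential. No differences worth noting.
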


\begin{proof}
  The existence of functions with an exponential garden-hose
  complexity can be shown by a simple counting argument. There are
  $2^{2^{2n}}$ different functions $f(x,y)$. For a given size $s =
  s(G)$ of $G$, for every $x \in \set{0,1}^n$, there are at most
  $(s+1)^{s+1}$ ways to choose the connections $E_{A_\circ}(x)$ on
  Alice's side, and thus there are at most $((s+1)^{s+1})^{2^n} =
  2^{2^n (s+1) \log(s+1)}$ ways to choose the function
  $E_{A_\circ}$. Similarly for $E_B$, there are at most $2^{2^n s
    \log(s)}$ ways to choose $E_B$. Thus, there are at most $2^{2\cdot
    2^n (s+1) \log(s+1)}$ ways to choose $G$ of size $s$. Clearly, in
  order for every function $f$ to have a $G$ of size $s$ that computes
  it, we need that $2\cdot 2^n (s+1) \log(s+1) \geq 2^{2n}$, and thus
  that $(s+1) \log(s+1) \geq 2^{n-1}$, which means that $s$ must be
  exponential.  \qed
\end{proof}

\subsection{Polynomial Garden-Hose Complexity and Log-Space Computations}

A family of Boolean functions $\set{f_n}_{n \in \N}$ is
\emph{log-space computable} if there exists a deterministic Turing
machine $M$ and a constant $c$, such that for any $n$-bit input $x$,
$M$ outputs the correct output bit $f_n(x)$, and at most $c \cdot
\log{n}$ locations of $M$'s work tapes are ever visited by $M$'s head
during computation.

\begin{definition}
  We define $\Ltwo$, called \emph{logarithmic space with local
    pre-processing}, to be the class of Boolean functions $f(x,y)$ for
  which there exists a Turing machine $M$ and two \emph{arbitrary}
  functions $\alpha(x),\beta(y)$, such that\footnote{For simplicity of
    notation, we give two arguments to the Turing machine whose concatenation is interpreted
    as the input.} $M(\alpha(x),\beta(y)) =
  f(x,y)$ and $M(\alpha(x),\beta(y))$ runs in space logarithmic in the
  size of the original inputs $|x|+|y|$.
\end{definition}

This definition can be extended in a natural way by considering Turing
machines and circuits corresponding to various complexity classes, and
by varying the number of players. For example, a construction as in
Proposition~\ref{prop:exp} and a similar reasoning as in
Proposition~\ref{prop:logspaceinv} below can be used to show that every
Boolean function is contained in $\mathrm{PSPACE}_{(2)}$. As main
result of this section, we show that our newly defined class $\Ltwo$
is equivalent to functions with polynomial garden-hose complexity. We
leave it for future research to study intermediate classes such as
$\mathrm{AC^0_{(2)}}$ which are related to the polynomial hierarchy of
communication complexity~\cite{BFS86}.

\begin{theorem} \label{thm:main} 
  The set of functions $f$ with polynomial garden-hose complexity
  $\gh(f)$ is equal to $\Ltwo$.
\end{theorem}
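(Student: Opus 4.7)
The plan is to prove the two inclusions separately, since the theorem asserts an equality of function classes.

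For the easy direction $\gh(f)=\mathrm{poly}(n)\Rightarrow f\in\Ltwo$, I would take a garden-hose game $G$ of size $s=s(G)=\mathrm{poly}(n)$ that computes $f$ and define the pre-processing functions $\alpha(x)$ and $\beta(y)$ to be explicit encodings of the local wirings $E_{A_\circ}(x)$ and $E_B(y)$, respectively; each has length $O(s\log s)=\mathrm{poly}(n)$. The Turing machine $M$ on input $(\alpha(x),\beta(y))$ then traces the unique maximal path $\pi(x,y)$ starting at vertex $0$: it stores only the current vertex (its side together with its label in $\set{1,\ldots,s}$, using $O(\log s)=O(\log n)$ bits), and at each step scans the input tape to look up the unique neighbour of the current vertex in $\alpha(x)$, $\beta(y)$, or among the fixed pipe edges $E$. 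When a degree-at-most-one endpoint is reached, $M$ outputs $0$ if the endpoint lies in $A_\circ$ and $1$ otherwise. This uses $O(\log n)$ work-tape space, establishing $f\in\Ltwo$.

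For the converse $f\in\Ltwo\Rightarrow\gh(f)=\mathrm{poly}(n)$, I would simulate a log-space Turing machine by a garden-hose game. Fix $M$ computing $f$ from $(\alpha(x),\beta(y))$ in space $c\log N$ with $N=|x|+|y|$, and pad so that $\alpha(x)$ and $\beta(y)$ have fixed polynomial lengths; the input tape positions then partition canonically into an Alice-region and a Bob-region. Define a \emph{crossing configuration} as the tuple consisting of the TM's state, its work-tape contents, its work-tape head position, and the side about to be read; there are $\mathrm{poly}(N)$ such configurations. The key observation is that, from a crossing configuration $c$ entering Alice's side, the next crossing configuration $T_A(c)$ (or a halting verdict in $\set{0,1}$) is determined by $\alpha(x)$ alone, and symmetrically for $T_B$ and $\beta(y)$. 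I would then allocate one pipe $p_c$ per crossing configuration, plus auxiliary pipes $q_c^A$ and $q_c^B$ per $c$ to encode halting verdicts. Alice wires her side as follows: if $T_A(c)=c'$ she connects $p_c$ to $p_{c'}$; if $T_A(c)=0$ she leaves $p_c$ unconnected; if $T_A(c)=1$ she connects $p_c$ to $q_c^A$; and she leaves every $q_c^B$ unconnected. Bob performs the symmetric construction with the roles of $0$ and $1$ swapped, using his own $q_c^B$ pipes. The tap is wired to $p_{c_0}$, where $c_0$ is the first crossing configuration produced by simulating $M$ from its initial state on Alice's side (with direct wiring to an output pipe if $M$ halts without ever crossing). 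Tracing the water then mirrors the TM step-for-step across crossings, and since a halting log-space TM runs in polynomial time only polynomially many crossings occur.

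The main obstacle lies in the second direction, specifically in reconciling two asymmetric behaviours: ``the water naturally exits on side $S$'' corresponds to an unconnected pipe end on $S$, whereas ``a sub-computation on side $S$ halts with the output pointing to the \emph{other} side'' requires actively redirecting the water away from $S$, which needs a spare pipe end on $S$. Since a single pipe end cannot be shared among multiple crossing configurations, this forces the introduction of the auxiliary pipes $q_c^A,q_c^B$, one pair per configuration. I would then have to verify carefully that Alice's and Bob's wirings never collide on the same pipe end (the $q_c^A$ are only used on Alice's side and left free on Bob's, and vice versa), that the initial hand-off via $p_{c_0}$ is consistent, and that the terminal vertex of $\pi(x,y)$ lies in $A_\circ$ precisely when $M$ accepts with output $0$. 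Once these bookkeeping details are settled, the two inclusions together yield the claimed equality.
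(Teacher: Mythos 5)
Your first direction and the overall architecture of the second direction (pipes labelled by boundary-crossing configurations, with the water flow mirroring the computation) match the paper's proof. But the second direction has a genuine gap: your wiring rule ``if $T_A(c)=c'$ she connects $p_c$ to $p_{c'}$'' is only a legal garden-hose strategy if the map $c\mapsto T_A(c)$, restricted to non-halting crossing configurations, is \emph{injective}. If two distinct configurations $c_1\neq c_2$ entering Alice's side both satisfy $T_A(c_1)=T_A(c_2)=c'$, Alice must attach both $p_{c_1}$ and $p_{c_2}$ to the single Alice-end of $p_{c'}$, giving that vertex degree $2$, which the model forbids (no T-pieces; $E_{A_\circ}(x)$ must have maximum degree $1$). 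You cannot escape this by wiring only the ``reachable'' configurations, since reachability of a crossing configuration depends on both $x$ and $y$. For a general deterministic log-space machine the transition relation is many-to-one, so such collisions do occur and the construction as stated fails. The obstacle you actually flag --- the asymmetry of halting verdicts and the auxiliary pipes $q_c^A,q_c^B$ --- is real but minor bookkeeping (the paper handles it with reserve pipes labelled $\accept$ and $\reject$); the collision problem is the essential difficulty.

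The paper's fix is to first replace $M$ by a \emph{reversible} log-space machine, using the result of Lange, McKenzie and Tapp that any $S(n)$-space deterministic computation can be made reversible in space $O(S(n))$. Reversibility gives exactly the missing injectivity: if $T_A(c_1)=T_A(c_2)=d$, then simulating the reversible machine backwards from $d$ (which Alice can do, knowing her half of the input) until the head re-enters her region yields a unique predecessor crossing configuration, forcing $c_1=c_2$. With that ingredient added --- and it must be the space-efficient reversibilization, since a naive history-keeping simulation would blow up the configuration count --- your construction goes through and coincides with the paper's. Your first direction ($\gh(f)$ polynomial implies $f\in\Ltwo$) is essentially identical to the paper's Proposition on that inclusion and is fine, modulo the small remark that the path-tracing machine must also remember enough (e.g., a flag for which kind of edge was last traversed) to avoid walking backwards along the path.
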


The two directions of the theorem follow from Theorem~\ref{thm:logspace} and Proposition~\ref{prop:logspaceinv}.
\begin{theorem} \label{thm:logspace}
  If $f: \set{0,1}^n \times \set{0,1}^n \to \set{0,1}$ is log-space computable, 
then $\gh(f)$ is polynomial in $n$. 
\end{theorem}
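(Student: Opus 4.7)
The strategy is to simulate the log-space machine $M$ computing $f$ by a garden-hose game whose pipes correspond to configurations of $M$. Since $M$ uses $O(\log n)$ work space on inputs of total length $2n$, the number of configurations (state, head positions on input and work tapes, and contents of the work cells) is $N = \mathrm{poly}(n)$, and the deterministic halting computation visits each configuration at most once. The essential structural observation is that each configuration $c$ reads either (i) a bit $x_i$ of Alice's input, (ii) a bit $y_j$ of Bob's input, or (iii) only work-tape cells; in case (i) Alice alone knows the next configuration given her input, in case (ii) Bob does, and in case (iii) both know it.

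Guided by this, I would assign one pipe $P_c$ to each configuration $c$ and declare a \emph{routing party} for $c$: Alice in cases (i) and (iii), Bob in case (ii). The tap is attached to $P_{c_0}$, the initial configuration's pipe. The wiring is chosen so that when the TM enters configuration $c$ the water exits $P_c$ on its routing party's side; that party, knowing its own input, then connects this end to an end of the successor pipe $P_{c'}$, inserting a short auxiliary ``bounce'' pipe whenever needed to flip the side-parity (because the natural parity alternates between Alice's and Bob's sides at each pipe traversal). Halting configurations are wired so that the water escapes on Bob's side when $f(x,y) = 1$ and on Alice's side when $f(x,y) = 0$, matching the garden-hose convention.

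The main obstacle, and the place where most care is needed, is the matching constraint of the garden-hose model: each pipe end admits at most one connection, yet in a generic log-space TM a configuration $c'$ may have many potential predecessors across different inputs, all of which would want to terminate at the two ends of $P_{c'}$. I would circumvent this by first applying the Lange--McKenzie--Tapp theorem (the log-space analogue of Bennett's reversibility construction) to replace $M$ by a reversible log-space machine $M'$ whose transition function is injective; in such a machine every configuration has a unique predecessor, so at most one wiring lands at each end of every $P_{c'}$, making Alice's and Bob's local connections valid matchings. Reversibility incurs only polynomial overhead in the configuration count, so the total pipe count, including the at most $O(N)$ bounce pipes (one per transition in the reversible machine), remains polynomial. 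Tracing the water from the tap then follows $M'$'s unique computation path on $(x,y)$ and terminates on the correct side by the wiring of the halting configurations, yielding $\gh(f) = \mathrm{poly}(n)$.
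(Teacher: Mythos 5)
Your overall architecture is the same as the paper's: simulate the log-space machine, use the Lange--McKenzie--Tapp reversibility construction so that the wiring is a valid matching, and bound the number of pipes by the (polynomial) number of configurations. The difference is granularity: you give every configuration its own pipe and route one TM step per pipe traversal, whereas the paper only assigns pipes to the configurations at which the input head crosses the boundary between the $x$-half and the $y$-half of the input tape (plus accept/reject pipes), letting each party privately simulate the whole segment of the computation during which the head stays on their half. The paper's coarser choice makes the side-alternation of the water automatically match the $C_A$/$C_B$ alternation, so no parity correction is needed (at the cost of also requiring the input head to be oblivious so that the crossing sets are input-independent).

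The gap is in your justification of the matching constraint. ``Every configuration has a unique predecessor'' is a statement about a \emph{fixed} input $(x,y)$, but Alice and Bob wire up the entire configuration graph knowing only $x$ and only $y$ respectively, so each of them independently installs a connection into $P_{c'}$ for every potential predecessor visible from their own input. Reversibility does give you that Alice sees at most one Alice-routed predecessor of $c'$ (two would coexist for some single input, contradicting reversibility), and likewise for Bob; but you must additionally rule out that $c'$ has an Alice-routed potential predecessor for some $x$ \emph{and} a Bob-routed one for some $y$, since then both parties wire into $c'$ and, concretely, Bob's unconditional bounce-pipe connection into $P_{c'}$ (needed for an Alice-to-Alice transition) collides at the same end of $P_{c'}$ with his input-dependent direct connection from a Bob-routed predecessor. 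This extra fact is true --- if both kinds of predecessor existed one could pick an input realizing both simultaneously, violating reversibility --- so each configuration has a single, input-independent ``entry side'' and a single entry end, and your construction goes through. But this lemma, and the check that the bounce pipes' own ends are each used at most once, is exactly the load-bearing step, and it is not supplied by the one-sentence appeal to unique predecessors. You should also state explicitly that both parties wire \emph{all} configurations assigned to them (not just those on the actual computation path), and argue that the spurious connections are harmless because the water, following the unique maximal path from the tap, traverses only the true computation path.
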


\begin{proof}[sketch, full proof in Appendix~\ref{proof:thm:logspace}]
\ \\
  Let $M$ be the deterministic log-space Turing machine deciding
  $f(x,y)=0$. Using techniques from~\cite{Lange1998}, $M$ can be made
  \emph{reversible} incurring only a constant loss in space. As $M$ is a
  log-space machine, it has at most polynomially many
  configurations. The idea for the garden-hose strategy is to label
  the pipes with those configurations of the machine $M$ where the
  input head of $M$ ``switches sides'' from the $x$-part of the input
  to the $y$-part or vice versa. Thanks to the reversibility of $M$,
  the players can then use one-to-one connections to wire up
  (depending on their individual inputs) the open ends of the pipes on
  their side, so that eventually the water flow corresponds to $M$'s
  computation of $f(x,y)$. \qed
\end{proof}

In the garden-hose model, we allow Alice and Bob to locally
pre-process their inputs before computing their wiring. Therefore, it
immediately follows from Theorem~\ref{thm:logspace} that any function
$f$ in $\Ltwo$ has polynomial garden-hose complexity, proving one
direction of Theorem~\ref{thm:main}.

We saw in Proposition~\ref{prop:expbound} that there exist functions with large garden-hose complexity. However, a negative implication of Theorem~\ref{thm:logspace} is that proving the existence of a {\em polynomial-time computable} function $f$ with exponential garden-hose complexity is at least as hard
as separating ${\rm L}$ from ${\rm P}$, a long-standing open problem in complexity theory.
\begin{corollary}
If there exists a function $f: \set{0,1}^n \times \set{0,1}^n \to \set{0,1}$ in P that has super-polynomial garden-hose complexity, then {\rm P} $\neq$ {\rm L}.  
\end{corollary}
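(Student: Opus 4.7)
The plan is to prove the corollary by direct contraposition of Theorem~\ref{thm:logspace}. Assume toward the contrapositive that $\mathrm{P} = \mathrm{L}$, and let $f$ be any function in $\mathrm{P}$. The input to $f$ has total length $2n$, so under the assumption $\mathrm{P} = \mathrm{L}$ there is a deterministic Turing machine that decides $f$ using space $O(\log(2n)) = O(\log n)$. Hence $f$ is log-space computable in the sense defined just before Theorem~\ref{thm:logspace}.

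Applying Theorem~\ref{thm:logspace} to this $f$ yields that $\gh(f)$ is polynomial in $n$. Since this holds for every $f \in \mathrm{P}$, we conclude that under the assumption $\mathrm{P} = \mathrm{L}$ no function in $\mathrm{P}$ can have super-polynomial garden-hose complexity. Taking the contrapositive gives exactly the statement of the corollary: the existence of an $f \in \mathrm{P}$ with super-polynomial $\gh(f)$ forces $\mathrm{P} \neq \mathrm{L}$.

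There is essentially no obstacle here: the corollary is an immediate logical consequence of Theorem~\ref{thm:logspace}, and the only small point to verify is the bookkeeping that ``log-space'' in the sense used by the theorem (logarithmic in $n$, with $x,y \in \{0,1\}^n$) coincides, up to a constant factor, with logarithmic space in the total input length $2n$, which is standard. No separate construction or estimate is needed, so the proof will be a few lines restating the contrapositive argument above.
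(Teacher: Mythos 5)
Your argument is correct and is exactly the reasoning the paper intends: the corollary is stated as an immediate consequence of Theorem~\ref{thm:logspace}, obtained by contraposition just as you do (if $\mathrm{P}=\mathrm{L}$ then every $f\in\mathrm{P}$ is log-space computable and hence has polynomial garden-hose complexity). The bookkeeping about input length $2n$ versus $n$ is indeed the only minor point, and you handle it correctly.
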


It remains to prove the other inclusion of Theorem~\ref{thm:main}.
\begin{proposition} \label{prop:logspaceinv}
Let $f: \set{0,1}^n \times \set{0,1}^n \to \set{0,1}$ be a Boolean
function. If $\gh(f)$ is polynomial (in $n$), then $f$ is in
$\Ltwo$. 
\end{proposition}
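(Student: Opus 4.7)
The plan is to take the garden-hose wiring itself as the local pre-processing and to simulate the water flow with a log-space Turing machine. Fix a garden-hose game $G$ of size $s = \gh(f)$ that computes $f$; by assumption $s$ is polynomial in $n$. I would define $\alpha(x)$ to be an explicit encoding of the edge set $E_{A_\circ}(x)$, for instance as a sorted list of unordered pairs of integers from $\{0,1,\dots,s\}$, and likewise $\beta(y)$ as an encoding of $E_B(y)$. Since each edge set is a matching on $O(s)$ vertices, each encoding has length $O(s \log s)$, which is polynomial in $n$. Consequently the input handed to the Turing machine has polynomial length, and a log-space budget of $O(\log(|x|+|y|))$ still amounts to $O(\log n)$ bits.

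The log-space machine $M$ then traces the path $\pi(x,y)$ step by step. On its work tape $M$ maintains only a current vertex label $v \in \{0,1,\dots,s\}$ and a side bit $b \in \{A,B\}$. Starting from $(v,b)=(0,A)$, each iteration searches the appropriate edge list ($\alpha(x)$ if $b=A$, otherwise $\beta(y)$) by linear scan for an edge incident to $v$; if an edge $\{v,v'\}$ is found, $M$ overwrites $v$ by $v'$ and then flips $b$ to simulate crossing the pipe labeled $v'$ (the vertex label is unchanged by the pipe, only the side is). If no incident edge is found in the current list, $M$ halts and outputs $0$ when $b=A$ and $1$ when $b=B$.

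For correctness, the graph $G(x,y)$ has maximum degree $2$ and vertex $0$ has degree at most $1$, so the component containing $0$ is a simple path whose other endpoint is precisely the terminus of $\pi(x,y)$; the simulation above walks exactly along this path, and the side at which it halts equals $f(x,y)$ by the defining property of $G$. For space, the stored state uses $O(\log s) = O(\log n)$ bits, and each linear scan of the input requires only an $O(\log n)$-bit position pointer together with a temporary buffer of $O(\log n)$ bits to hold one endpoint for comparison. Hence $M$ runs in space $O(\log(|x|+|y|))$, placing $f$ in $\Ltwo$.

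The main thing to verify is the space bound; no serious obstacle arises, because both the pre-processing size and the work space are governed by $s$, which is polynomial. The one point that might look subtle is whether $M$ needs to detect cycles or run a step counter to guarantee termination, but it does not: since the component of $0$ in $G(x,y)$ is a simple path, the simulation terminates automatically the first time it fails to find a continuing local edge.
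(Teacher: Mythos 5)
Your proof is correct and follows essentially the same route as the paper's: encode the wirings $E_{A_\circ}(x)$ and $E_B(y)$ as the local pre-processing $\alpha,\beta$, then trace $\pi(x,y)$ in logarithmic space by storing only the current pipe label and a side bit. The extra details you supply (the linear-scan subroutine and the observation that the component of $0$ is a simple path, so no cycle detection or step counter is needed) are exactly the points the paper leaves implicit, and they check out.
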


\begin{proof}
Let $G$ be the garden-hose game that achieves $s(G) = \gh(f)$. We write $s$ for $s(G)$, the number of pipes, and we let $E_{A_\circ}$ and $E_B$ be the underlying edge-picking functions, which on input $x$ and $y$, respectively, output the connections that Alice and Bob apply to the pipes. 
Note that by assumption, $s$ is polynomial. Furthermore, by the restrictions on $E_{A_\circ}$ and $E_B$, on any input, they consist of at most $(s+1)/2$ connections. 

We need to show that $f$ is of the form $f(x,y) = g(\alpha(x),\beta(y))$, where $\alpha$ and $\beta$ are arbitrary functions $\set{0,1}^n \to \set{0,1}^m$, $g: \set{0,1}^m \times \set{0,1}^m \to \set{0,1}$ is log-space computable, and $m$ is polynomial in $n$. We define $\alpha$ and $\beta$ as follows. For any $x,y \in \set{0,1}^n$, $\alpha(x)$ is simply a natural encoding of $E_{A_\circ}(x)$ into $\set{0,1}^m$, and $\beta(y)$ is a natural encoding of $E_B(y)$ into $\set{0,1}^m$. In the hose-terminology we say that $\alpha(x)$ is a binary encoding of the connections of Alice,
and $\beta(y)$ is a binary encoding of the connections of
Bob. Obviously, these encodings can be done with $m$ of polynomial
size. Given these encodings, finding the endpoint of the maximum path
$\pi(x,y)$ starting in $0$ can be done with logarithmic space: at any
point during the computation, the Turing machine only needs to
maintain a pointer to the position of the water and a binary flag to
remember on which side of the input tape the head is. 
Thus, the function $g$ that computes $g(\alpha(x),\beta(y)) = f(x,y)$ is log-space computable in $m$ and thus also in $n$. 
\qed
\end{proof}

\subsection{Randomized Garden-Hose Complexity} \label{sec:randomized}
It is natural to study the setting where Alice and Bob share a common
random string and are allowed to err with some probability
$\eps$. More formally, we let the players' local strategies
$E_{A_\circ}(x,r)$ and $E_{B}(y,r)$ depend on the shared randomness $r$
and write $G_r(x,y)=f(x,y)$ if the resulting garden-hose game
$G_r(x,y)$ computes $f(x,y)$.

\begin{definition}
  Let $r$ be the shared random string. The \emph{randomized
    garden-hose complexity} of a Boolean function $f: \set{0,1}^n
  \times \set{0,1}^n \to \set{0,1}$ is the size $s(G_r)$ of the
  smallest garden-hose game $G_r$ such that $\forall x,y: \;
  \Pr_r[G_r(x,y)=f(x,y)] \geq 1-\eps$. We denote this minimal size by
  $\ghr(f)$.
\end{definition}

In Appendix~\ref{app:amplification}, we show that the error
probability can be made exponentially small by repeating the protocol
a polynomial number of times.
\begin{proposition}\label{prop:amplification}
 Let $f:\set{0,1}^n \times \set{0,1}^n \to \set{0,1}$ be a function
 such that $\ghr(f)$ is polynomial in $n$, with error $\eps \leq
 \frac{1}{2}-n^{-c}$ for a constant $c>0$.
 For every constant $d>0$ there exists a polynomial $q(\cdot)$ such that
 $\gh_{2^{-n^d}}(f) \leq q\big(\ghr(f) \big)$.
\end{proposition}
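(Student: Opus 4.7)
The plan is a standard parallel-repetition-and-majority-vote amplification, with the final majority step realized implicitly via the log-space characterization of polynomial garden-hose complexity given by Theorem~\ref{thm:main}.

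First I fix the given randomized garden-hose game $G$ of size $s = \ghr(f)$ with error $\eps \leq \tfrac12 - n^{-c}$, set $k = \lceil C\,n^{2c+d}\rceil$ for a sufficiently large constant $C$, and bundle $k$ independent copies of the shared randomness into $r = (r_1,\ldots,r_k)$. Without loss of generality each $r_i$ has length $O(s\log s)$, since any wiring on $s$ pipes is encodable in that many bits. For every fixed $(x,y)$, the bits $b_i := G_{r_i}(x,y)$ are then independent with $\Pr[b_i = f(x,y)] \geq \tfrac12 + n^{-c}$, so the standard Chernoff bound yields
\[
\Pr_r\bigl[\mathrm{MAJ}(b_1,\ldots,b_k) \neq f(x,y)\bigr] \;\leq\; e^{-\Omega(k\,n^{-2c})} \;\leq\; 2^{-n^d} \, .
\]

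Next I introduce an auxiliary two-party function $g$ whose Alice-input is $x' = (x,r)$ and Bob-input is $y' = (y,r)$, defined by $g(x',y') = \mathrm{MAJ}(G_{r_1}(x,y),\ldots,G_{r_k}(x,y))$ when the $r$-parts of $x'$ and $y'$ agree (and arbitrarily otherwise), and show that $g \in \Ltwo$. Let Alice pre-process $x'$ into the tuple of her $k$ wirings $\alpha(x') = (E_{A_\circ}(x,r_1),\ldots,E_{A_\circ}(x,r_k))$ and let Bob analogously compute $\beta(y') = (E_B(y,r_1),\ldots,E_B(y,r_k))$; such arbitrary pre-processing is permitted by the definition of $\Ltwo$. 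A deterministic Turing machine then evaluates $g$ from $(\alpha(x'),\beta(y'))$ using only a loop counter over $i \in \set{1,\ldots,k}$, a running tally of $1$-votes, and a pointer tracking the current pipe and side of the water as it simulates sub-game~$i$. Simulating each sub-game is a walk in a graph of maximum degree $2$, so a single pipe-and-side pointer suffices; the overall workspace is $O(\log(|x'|+|y'|)) = O(\log n)$, confirming $g \in \Ltwo$.

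Applying Theorem~\ref{thm:main} to $g$ produces a deterministic garden-hose game $G'$ whose size is polynomial in $|x'|+|y'|$, hence polynomial in $n$ and in $\ghr(f)$. Interpreting the shared string $r$ as supplying the $r$-parts of both $x'$ and $y'$ converts $G'$ directly into a randomized garden-hose game for $f$ of the same size and with error at most $2^{-n^d}$, yielding $\gh_{2^{-n^d}}(f) \leq q(\ghr(f))$ for an appropriate polynomial $q$ depending on $c$ and $d$.

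The main obstacle is the majority vote: implementing $\mathrm{MAJ}$ of $k$ sub-game outputs inside one garden-hose game from scratch would be awkward to engineer, and the whole point of the argument is to bypass a direct hose-level construction by observing that the combined computation is log-space-with-pre-processing and invoking Theorem~\ref{thm:main}. The only technical detail worth checking inside that step is that tracing the water through each sub-game is genuinely log-space, which holds because each sub-game's graph has maximum degree $2$ and the water's state is thus encodable in $O(\log s)$ bits.
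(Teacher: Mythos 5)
Your proposal is correct and follows essentially the same route as the paper: majority vote over $k = \Theta(n^{2c+d})$ independent runs, a Chernoff bound for the error, and then realizing the majority-of-$k$-simulations as a log-space computation after local pre-processing of the $k$ wirings, so that Theorem~\ref{thm:logspace}/Theorem~\ref{thm:main} yields a polynomial-size garden-hose game. Your packaging via an explicit auxiliary function $g \in \Ltwo$ is just a slightly more formal phrasing of the paper's argument.
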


Using this result, any randomized strategy can be turned into a
deterministic strategy with only a polynomial overhead in the number of
pipes.
\begin{proposition}\label{prop:derandom}
 Let $f:\set{0,1}^n \times \set{0,1}^n \to \set{0,1}$ be a function
 such that $\ghr(f)$ is polynomial in $n$ and $\eps \leq
 \frac{1}{2}-n^c$ for a constant $c>0$.
 Then there exists a polynomial $q(\cdot)$ such that $\gh(f) \leq q\big(\ghr(f)\big)$.
\end{proposition}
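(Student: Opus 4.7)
The plan is to combine the amplification result of Proposition~\ref{prop:amplification} with a standard probabilistic-method derandomization (fixing a good random string by union bound). The only nontrivial ingredient is already encapsulated in Proposition~\ref{prop:amplification}; the rest is routine.

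First, I would invoke Proposition~\ref{prop:amplification} with some fixed constant $d>1$, say $d=2$. This yields a polynomial $q_0(\cdot)$ and a randomized garden-hose game $G'_r$ of size at most $q_0(\ghr(f))$ computing $f$ with error probability at most $2^{-n^{d}}$ on every input pair $(x,y)$. Note that the hypothesis that $\ghr(f)$ is polynomial (with error bounded away from $1/2$ by $n^{-c}$) is exactly what is needed to apply Proposition~\ref{prop:amplification}.

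Next, I would derandomize by a union bound. There are only $2^{2n}$ possible input pairs $(x,y) \in \set{0,1}^n \times \set{0,1}^n$, and for each one the probability (over the shared random string $r$) that $G'_r(x,y) \neq f(x,y)$ is at most $2^{-n^{d}}$. Hence
\[
\Pr_r\bigl[\exists (x,y):\; G'_r(x,y) \neq f(x,y)\bigr] \leq 2^{2n} \cdot 2^{-n^{d}} < 1
\]
for all sufficiently large $n$ (since $d>1$). Therefore there exists at least one string $r^*$ such that $G'_{r^*}(x,y) = f(x,y)$ for every input pair $(x,y)$. Hardwiring this $r^*$ into the strategies of Alice and Bob yields a deterministic garden-hose game $G := G'_{r^*}$ that computes $f$ with no error, using the same number of pipes as $G'_r$. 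For the finitely many small $n$ where the union bound fails, we simply appeal to the trivial bound $\gh(f) \leq 2^n + 1$ from Proposition~\ref{prop:exp} and absorb these cases into the polynomial $q(\cdot)$.

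Putting the two pieces together, we obtain $\gh(f) \leq s(G) \leq q_0\bigl(\ghr(f)\bigr)$, which is polynomial in $\ghr(f)$; setting $q(\cdot) := q_0(\cdot)$ (adjusted by a constant to handle the finitely many exceptional small $n$) completes the proof. I don't expect a real obstacle here: the amplification proposition does all of the work, and the derandomization step is the textbook probabilistic argument that an exponentially small error probability per input pair can be beaten by the doubly exponentially many random strings available.
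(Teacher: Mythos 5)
Your proposal is correct and follows essentially the same route as the paper: amplify the error to be exponentially small via Proposition~\ref{prop:amplification} and then fix a good random string by a union bound over the $2^{2n}$ input pairs. Your explicit handling of the finitely many small $n$ via the trivial bound is a minor extra care the paper omits, but the argument is otherwise identical.
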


\begin{proof}[sketch]
  By Proposition~\ref{prop:amplification} there exists a randomized
  garden-hose protocol $G_r(x,y)$ of size $q(\ghr(f))$ with error
  probability at most $2^{-2n-1}$. The probability for a random string
  $r$ to be wrong for all inputs is at most $2^{2n}\cdot 2^{-2n-1} <
  1$. In particular, there exists a string $\hat{r}$
  which works for every input $(x,y)$.\qed
\end{proof}

Using this Proposition~\ref{prop:derandom}, we conclude that the lower bound from
Proposition~\ref{prop:expbound} carries over to the randomized setting.
\begin{corollary}\ \\
   There exist functions $f:\set{0,1}^n \times \set{0,1}^n \to
  \set{0,1}$ for which $\ghr(f)$ is exponential.
\end{corollary}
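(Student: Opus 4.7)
The plan is to obtain the corollary as a direct contrapositive of Proposition~\ref{prop:derandom}, using Proposition~\ref{prop:expbound} as the source of hardness. Concretely, Proposition~\ref{prop:derandom} supplies a polynomial $q$ such that $\gh(f) \leq q\bigl(\ghr(f)\bigr)$ for every Boolean $f$ (with the stated error parameter). Taking the ``inverse'' of this inequality yields the lower bound $\ghr(f) \geq q^{-1}\bigl(\gh(f)\bigr)$.

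Next I would appeal to Proposition~\ref{prop:expbound}, which guarantees a family of functions $f_n : \set{0,1}^n \times \set{0,1}^n \to \set{0,1}$ whose deterministic garden-hose complexity satisfies $\gh(f_n) = 2^{\Omega(n)}$. Substituting into the previous inequality gives $\ghr(f_n) \geq q^{-1}\bigl(2^{\Omega(n)}\bigr)$. Since $q$ is a fixed polynomial, its inverse applied to an exponential expression is still exponential (if $q(x)=O(x^k)$, then $q^{-1}(y)=\Omega(y^{1/k})$), and hence $\ghr(f_n) = 2^{\Omega(n)}$, as required.

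Equivalently, one may phrase the argument by contradiction: if every Boolean function on $n$-bit inputs had polynomial randomized garden-hose complexity, then Proposition~\ref{prop:derandom} would force every such function to have polynomial deterministic garden-hose complexity, contradicting Proposition~\ref{prop:expbound}.

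There is no real obstacle here, since all the heavy lifting is already contained in the derandomization statement (Proposition~\ref{prop:derandom}) and the counting argument (Proposition~\ref{prop:expbound}); the only thing to verify explicitly is the trivial fact that the inverse of a polynomial takes exponentials to exponentials, so the exponential lower bound survives the blow-up incurred in derandomization.
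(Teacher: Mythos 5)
Your proposal is correct and follows exactly the paper's (one-line) argument: combine the derandomization bound $\gh(f)\le q(\ghr(f))$ from Proposition~\ref{prop:derandom} with the counting lower bound of Proposition~\ref{prop:expbound}, noting that inverting a fixed polynomial preserves exponential growth. The only minor point worth keeping in mind is that the pure contradiction phrasing by itself yields only ``super-polynomial,'' so the quantitative inversion you spell out first is the version that actually delivers ``exponential.''
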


With the same reasoning as in Proposition~\ref{prop:commtogh}, we get
that lower bounds on the randomized one-way communication complexity
with public shared randomness carry over to the randomized garden-hose
complexity (up to a logarithmic factor).
\begin{proposition} \label{prop:commtogh_rand} Let
  $R^{1,\mathrm{pub}}_{\eps}(f)$ denote the minimum
  communication cost of a one-way-communication protocol which
  computes $f$ with an error $\eps$ using public shared
  randomness. Then, $R^{1,\mathrm{pub}}_{\eps}(f) \leq
  \ghr(f) \log(\ghr(f))$.
\end{proposition}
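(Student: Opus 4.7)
The plan is to mimic the argument of Proposition~\ref{prop:commtogh}, but now using the shared random string of the randomized garden-hose game as the public random string of the one-way communication protocol. Fix an optimal randomized garden-hose game $G_r$ of size $s = \ghr(f)$, with edge-picking functions $E_{A_\circ}(x,r)$ and $E_B(y,r)$ and error at most $\eps$.

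First I would describe the simulation. Alice and Bob interpret $r$ as their shared public random string. On input $x$, Alice computes $E_{A_\circ}(x,r)$ and sends a binary encoding of this matching on $A_\circ$ to Bob. On input $y$, Bob computes $E_B(y,r)$ locally, assembles $G_r(x,y)$ from $E$, his own connections, and Alice's message, then traces the unique maximal path $\pi(x,y)$ starting at the tap vertex $0$ and outputs $0$ if it ends in $A_\circ$ and $1$ otherwise. For every fixed $(x,y)$ the output on randomness $r$ is exactly the outcome of $G_r(x,y)$, so the error probability over the choice of $r$ is bounded by $\eps$ uniformly in $(x,y)$; that is precisely what public-coin one-way communication with error $\eps$ requires.

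For the communication cost I would reuse the encoding from Proposition~\ref{prop:commtogh}: the matching $E_{A_\circ}(x,r)$ lives on the $s+1$ labeled vertices of $A_\circ$ and can be described by giving, for each vertex, the label of its partner or a marker for ``unmatched''. This takes at most $\ghr(f)\log(\ghr(f))$ bits, matching the bound claimed in the statement.

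There is essentially no obstacle beyond being careful that the reduction preserves the public-coin nature of the randomness: Alice's message depends only on $(x,r)$, Bob's local computation depends only on $(y,r)$ and the message, and no additional error is introduced by the simulation. Everything beyond this is inherited directly from the deterministic argument, so the proof is short and self-contained.
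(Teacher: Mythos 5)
Your proposal is correct and follows exactly the route the paper intends: the paper gives no separate proof for this proposition, merely invoking ``the same reasoning as in Proposition~\ref{prop:commtogh}'', and your simulation (Alice sends an encoding of $E_{A_\circ}(x,r)$, Bob traces the path, the shared string $r$ becomes the public coin, and the per-input error over $r$ is inherited unchanged) is precisely that reasoning carried over to the randomized setting. The only nitpick is that encoding a matching on the $s+1$ vertices of $A_\circ$ strictly costs about $(s+1)\log(s+2)$ bits rather than $s\log s$, but the paper's own statement of Proposition~\ref{prop:commtogh} is equally loose on this point, so your accounting matches theirs.
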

For instance, the linear lower bound $R^{pub}_{\eps}(IP) \in
\Omega(n)$ from~\cite{CG88} for the inner-product function
yields $\ghr(IP) \in \Omega(\frac{n}{\log n})$.

\subsection{Quantum Garden-Hose Complexity} \label{sec:quantumgh}
Let us consider the setting where Alice and Bob share an arbitrary
entangled quantum state besides their water pipes. Depending on their
respective inputs $x$ and $y$, they can perform local quantum
measurements on their parts of the entangled state and wire up the
pipes depending on the outcomes of these measurements. We denote the
resulting \emph{quantum garden-hose complexity} with $\ghq(f)$ in the
deterministic case and with $\ghqr(f)$ if errors are allowed.

With the same reasoning as in Proposition~\ref{prop:commtogh}, we get
that lower bounds on the entanglement-assisted one-way communication
complexity carry over to the quantum garden-hose complexity (up to a
logarithmic factor).
\begin{proposition} \label{prop:commtogh_quant} For $\eps \geq 0$, let
  $Q^1_{\eps}(f)$ denote the minimum cost of an
  entanglement-assisted one-way communication protocol which computes
  $f$ with an error $\eps$. Then,
  $Q^1_{\eps}(f) \leq \ghqr(f) \log(\ghqr(f))$.
\end{proposition}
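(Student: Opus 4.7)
The plan is to mimic the transformation used in Proposition~\ref{prop:commtogh}, exploiting the fact that shared entanglement is a free resource both in the quantum garden-hose model and in entanglement-assisted one-way communication. First, I would fix an optimal quantum garden-hose protocol computing $f$ with error at most $\eps$ and size $s := \ghqr(f)$. In this protocol, Alice and Bob share some entangled state $\ket{\psi}$; on input $x$, Alice performs an $x$-indexed measurement on her half, obtains a classical outcome $a$, and uses $(x,a)$ to pick a wiring $E_{A_\circ}(x,a)$ of her pipe ends and tap; similarly Bob measures to obtain outcome $b$ and wiring $E_{B}(y,b)$. By assumption, the endpoint of the maximal path in the resulting garden-hose graph correctly identifies $f(x,y)$ with probability at least $1-\eps$ over the measurement outcomes.

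Second, I would convert this into an entanglement-assisted one-way communication protocol as follows. Alice and Bob pre-share the same state $\ket{\psi}$ used above. On input $x$, Alice performs exactly the same measurement as in the garden-hose strategy, computes the wiring $E_{A_\circ}(x,a)$, encodes it as a classical bitstring, and transmits it to Bob in the computational basis. On input $y$, Bob performs his own garden-hose measurement to obtain $E_{B}(y,b)$, combines it with the received message to assemble the full graph, and traces the maximal path starting from the tap to decide the output. Since the joint distribution of $(a,b)$ is identical to that in the original protocol, the output agrees with $f(x,y)$ with probability at least $1-\eps$.

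Third, I would bound the communication. A wiring on Alice's $s{+}1$ vertices (the tap together with the $s$ pipe ends) is a partial matching and can be encoded as a function from $\set{0,\ldots,s}$ to itself (mapping each vertex to its partner, with fixed points for unmatched vertices). This costs at most $(s{+}1)\lceil \log(s{+}1)\rceil$ classical bits, and hence at most that many qubits in the one-way channel, yielding $Q^{1}_{\eps}(f) \leq \ghqr(f)\log \ghqr(f)$ in exactly the same asymptotic sense as Proposition~\ref{prop:commtogh}. There is no real obstacle here: the argument is a direct quantum analogue of the classical reasoning, and the only mild subtlety, the bit-accounting for the wiring encoding, is identical in both settings.
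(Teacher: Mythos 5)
Your proposal is correct and follows essentially the same route as the paper: the paper proves this proposition by simply invoking ``the same reasoning as in Proposition~\ref{prop:commtogh}'', namely that Alice measures her share of the entangled state, computes her wiring, and sends its encoding (at most $\gh$-many vertices, hence roughly $s\log s$ bits) to Bob, who completes the graph and traces the water path. The only cosmetic difference is that you spell out the measurement step and the encoding bookkeeping explicitly, which the paper leaves implicit.
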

For instance, the lower bound $Q^1_{\eps}(IP) \in \Omega(n)$ which
follows from results in~\cite{CvDNT98} gives $\ghqr(IP) \in
\Omega(n/\log n)$. For the disjointness function, $Q^1_{\eps}(DISJ)
\in \Omega(\sqrt{n})$ from~\cite{Razborov03} implies $\ghqr(DISJ) \in
\Omega(\sqrt{n}/\log n)$.

In Appendix~\ref{app:separations}, we present partial functions which
give a separation between the quantum and classical
garden-hose complexity in the deterministic and in the randomized setting.
\begin{theorem} \label{thm:separations}
There exist partial Boolean functions $f$ and $g$ such that
\begin{enumerate}
\item $\ghq(f) \in O(\log n)$ \, and \,  $\gh(f) \in
  \Omega(\frac{n}{\log n})$,
\item $\ghqr(g) \in O(\log n)$ \, and \, $\ghr(g) \in
  \Omega(\frac{\sqrt{n}}{\log{n}})$.
\end{enumerate}
\end{theorem}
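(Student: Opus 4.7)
The plan is to lift known exponential separations between quantum and classical one-way communication complexity to the garden-hose setting via Propositions~\ref{prop:commtogh} and~\ref{prop:commtogh_rand}, and then to construct matching quantum garden-hose protocols of size $O(\log n)$ by hand.

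For the classical lower bounds I would argue as follows. In part~(2), take $g$ to be the Hidden Matching Problem of Bar-Yossef, Jayram, and Kerenidis, which satisfies $R^{1,\mathrm{pub}}_{\eps}(g) \in \Omega(\sqrt n)$ while admitting an entanglement-assisted one-way protocol using $O(\log n)$ qubits. Plugging the lower bound into Proposition~\ref{prop:commtogh_rand} immediately yields $\ghr(g) \in \Omega(\sqrt n / \log n)$. In part~(1), take a partial function $f$ with $D^1(f) \in \Omega(n)$ that nevertheless admits an \emph{exact} quantum protocol using $O(\log n)$ qubits of entanglement; a natural candidate is a partial function derived from a pseudo-telepathy game or an exact quantum-fingerprinting construction. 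Proposition~\ref{prop:commtogh} then gives $\gh(f) \in \Omega(n / \log n)$.

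For the quantum upper bounds $\ghq(f), \ghqr(g) \in O(\log n)$ the idea is that Alice and Bob pre-share $O(\log n)$ EPR pairs, measure them in bases that depend on their inputs $x$ and $y$, and use the resulting correlated classical outcomes together with the inputs to wire up only $O(\log n)$ pipes. For the Hidden Matching instance, the quantum one-way protocol can be reorganised so that the measurement outcomes pick out a single edge of Bob's matching, after which a constant-size wiring routes the water to the side determined by the corresponding parity. For the pseudo-telepathy-based $f$, the measurements are arranged so that the outcomes on the two sides jointly encode $f(x,y)$ with certainty, and only a handful of pipes is then needed to translate this encoding into a flow direction.

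The main technical hurdle is precisely these explicit $O(\log n)$-pipe quantum garden-hose protocols. A naive reduction through teleportation of the $O(\log n)$-qubit quantum message would produce a garden-hose game with $\mathrm{poly}(n)$ pipes, since each of the $O(\log n)$ Pauli correction bits doubles the number of wiring possibilities and all of them would need to be encoded by distinct pipes. Circumventing this blow-up requires exploiting the fact that the protocols' final output is a single bit, so that after measurement only $O(\log n)$ wiring degrees of freedom remain and these can be realised directly with $O(\log n)$ pipes rather than by faithful teleportation. Making this rigorous for the two chosen partial functions is where I expect the bulk of the work to go.
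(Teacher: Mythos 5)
Your overall strategy coincides with the paper's: both classical lower bounds are obtained by feeding known one-way communication lower bounds into Propositions~\ref{prop:commtogh} and~\ref{prop:commtogh_rand}, and for part~(2) your choice of the Hidden Matching problem is essentially the function the paper uses (the Noisy/Boolean Hidden Matching variant of~\cite{GKKRdW07}, with the $\Omega(\sqrt{n})$ classical one-way bound). However, two genuine gaps remain. First, for part~(1) you never commit to a function, and the candidates you float are off target: pseudo-telepathy games are nonlocal games rather than communication problems with $D^1 \in \Omega(n)$, and quantum fingerprinting for equality is inherently bounded-error, not exact. The function that works is the promise-equality $EQ'$ of~\cite{BCW98} (distributed Deutsch--Jozsa), where either $x=y$ or $\Delta(x,y)=n/2$; Theorem~1.7 of~\cite{BCW98} gives the $\Omega(n)$ zero-error classical lower bound, and the promise is exactly what makes the quantum protocol exact.

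Second, the ``main technical hurdle'' you flag --- realising the quantum protocols with $O(\log n)$ pipes --- dissolves once you see the right reduction, which your proposal does not supply. No quantum message is ever teleported: Alice and Bob measure $O(\log n)$ shared EPR pairs (after inserting phases $(-1)^{x_i}$, $(-1)^{y_i}$ and applying Hadamards), and the correlations of the promise problem guarantee that the \emph{classical} outcomes $a,b$ (resp.\ $a$ and $i\oplus j$ for hidden matching) satisfy $EQ'(x,y)=EQ(a,b)$ (resp.\ reduce the answer to $a\cdot(i\oplus j)$ plus one bit of $w$). The players then run the \emph{classical} garden-hose protocols for $EQ$ and $IP$ from Proposition~\ref{prop:upperbounds} on these $O(\log n)$-bit strings, which cost $3m+1$ and $4m+1$ pipes on $m$-bit inputs, hence $O(\log n)$ pipes. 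Note also that your claim of ``constant-size wiring'' after the hidden-matching measurement is wrong: Alice holds $a$ and Bob holds $i\oplus j$, each of length $\log(2n)$, and neither knows the other's string, so computing $a\cdot(i\oplus j)$ genuinely requires the $\Theta(\log n)$-pipe inner-product gadget.
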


\section{Application to Position-Based \\Quantum Cryptography} 
\label{sec:application}
The goal of {\em position-based cryptography} is to use the
geographical position of a party as its only ``credential".  For
example, one would like to send a message to a party at a geographical
position~$\pos$ with the guarantee that the party can decrypt the
message only if he or she is physically present at~$\pos$.  The
general concept of position-based cryptography was introduced by
Chandran, Goyal, Moriarty and Ostrovsky~\cite{CGMO09}.

A central task in position-based cryptography is the problem of {\em
  position-verification}.  We have a {\em prover}~$P$ at position~$\pos$,
  wishing to convince a set of {\em verifiers} $V_0,\ldots,V_k$
(at different points in geographical space) that~$P$ is indeed at that
position~$\pos$. The prover can run an interactive protocol with the
verifiers in order to convince them.  The main technique for such a
protocol is known as distance bounding~\cite{BC93}. In this technique,
a verifier sends a random nonce to~$P$ and measures the time taken for
$P$ to reply back with this value.  Assuming that the speed of
communication is bounded by the speed of light, this technique gives
an upper bound on the distance of~$P$ from the verifier.

The problem of secure position-verification has been studied before in the
field of wireless security, and there have been several proposals
for this task (\cite{BC93,SSW,VN04,B04} \cite{CH05,SP05,ZLFW06,CCS06}).
However, \cite{CGMO09} shows that there exists no
protocol for secure position-verification that offers security in the presence
of {\em multiple colluding} adversaries. In other words, the set of
verifiers cannot distinguish between the case when they are
interacting with an honest prover at~$\pos$ and the case when they
are interacting with multiple colluding dishonest provers, none of
which is at position~$\pos$.

The impossibility result of~\cite{CGMO09} relies heavily on the fact
that an adversary can locally store all information he receives {\em
  and} at the same time share this information with other colluding
adversaries, located elsewhere. Due to the quantum no-cloning theorem, such a
strategy will not work in the quantum setting, which opens the door to
secure protocols that use quantum information. The quantum model was
first studied by Kent et al. under the name of ``quantum
tagging''~\cite{KMSB06,KMS11}. Several schemes were
developed~\cite{KMS11,Mal10a,CFGGO10,Mal10b,LL11} and proven later to
be insecure. Finally in~\cite{Buhrman2011} it was shown that in
general no unconditionally secure quantum position-verification scheme is
possible. Any scheme can be broken using a double exponential amount
of EPR pairs in the size of the messages of the protocol. Later, Beigi
and K\"onig improved in~\cite{Beigi2011} the double exponential
dependence to single exponential making use of port-based
teleportation~\cite{IH08,IH09}.

Due to the exponential overhead in EPR pairs, the general no-go
theorem does not rule out the existence of quantum schemes that are
secure for all practical purposes. Such schemes should have the
property that the protocol, when followed honestly, is feasible, but
cheating the protocol requires unrealistic amounts  of  resources, for
example EPR pairs or time.

\subsection{A Single-Qubit Scheme} \label{sec:motivation} 
Our original motivation for the garden-hose model was to study a
particular quantum protocol for secure position verification,
described in Figure~\ref{fig:PVqubit}. The protocol is of the generic
form described in Section~3.2 of~\cite{Buhrman2011}. In
Step~\ref{step:preparation}, the verifiers prepare challenges for the
prover. In Step~\ref{step:send}, they send the challenges, timed in
such a way that they all arrive at the same time at the prover. In
Step~\ref{step:prover}, the prover computes his answers and sends them
back to the verifiers. Finally, in Step~\ref{step:verification}, the
verifiers verify the timing and correctness of the answer.

As in~\cite{Buhrman2011}, we consider here for simplicity the case
where all players live in one dimension, the basic ideas generalize to
higher dimensions. In one dimension, we can focus on the case of two
verifiers $V_0, V_1$ and an honest prover $P$ in between them.

We minimize the amount of quantum communication in that only one
verifier, say $V_0$, sends a qubit to the prover, whereas both
verifiers send classical $n$-bit strings $x,y \in \set{0,1}^n$ that
arrive at the same time at the prover. We fix a publicly known Boolean
function $f: \set{0,1}^n \times \set{0,1}^n \rightarrow \set{0,1}$
whose output $f(x,y)$ decides whether the prover has to return the
qubit (unchanged) to verifier $V_0$ (in case $f(x,y)=0$) or to
verifier $V_1$ (if $f(x,y)=1$).

\begin{figure}[htb]
\small
\begin{protocol}
\begin{enumerate}\setlength{\parskip}{0.1ex}\setcounter{enumi}{-1}
\item\label{step:preparation} $V_0$ randomly chooses two $n$-bit
  strings $x,y \in \set{0,1}^n$ and privately sends $y$
  to~$V_1$. $V_0$ prepares an EPR pair $(\ket{0}_V\ket{0}_P +
  \ket{1}_V\ket{1}_P)/\sqrt{2}$. If $f(x,y)=0$, $V_0$ keeps the qubit
  in register $V$. Otherwise, $V_0$ sends the qubit in register $V$
  privately to $V_1$.
\item\label{step:send} $V_0$ sends the qubit in register $P$ to the prover $P$ together
  with the classical $n$-bit string $x$. $V_1$ sends $y$ so that it
  arrives at the same time as the information from $V_0$ at $P$.
\item\label{step:prover} $P$ evaluates $f(x,y) \in \set{0,1}$ and
  routes the qubit to $V_{f(x,y)}$.
\item\label{step:verification} $V_0$ and $V_1$ accept if the qubit
  arrives in time at the correct verifier and the Bell measurement of
  the received qubit together with the qubit in $V$ yields the correct
  outcome.
\end{enumerate}
\end{protocol}
 \caption{Position-verification scheme $\PVqubit$ using one
     qubit and classical $n$-bit strings.}
 \label{fig:PVqubit}
\end{figure}

The motivation for considering this protocol is the following: As the
protocol uses only one qubit which needs to be correctly routed, the
honest prover's quantum actions are trivial to perform. His main task
is evaluating a classical Boolean function $f$ on classical inputs $x$
and $y$ whose bit size $n$ can be easily scaled up.  On the other
hand, our results suggest that the adversary's job of successfully
attacking the protocol becomes harder and harder for larger input
strings $x,y$.

\subsection{Connection to the Garden-Hose Model}
In order to analyze the security of the protocol $\PVqubit$, we define
the following communication game in which Alice and Bob play the roles
of the adversarial attackers of $\PVqubit$. Alice starts with an
unknown qubit $\ket{\phi}$ and a classical $n$-bit string $x$ while
Bob holds the $n$-bit string $y$. They also share some quantum state
$\ket{\eta}_{AB}$ and both players know the Boolean function $f:
\set{0,1}^n \times \set{0,1}^n \to\{0,1\}$. The players are allowed
one round of simultaneous classical communication combined with
arbitrary local quantum operations. When $f(x,y)=0$, Alice should be
in possession of the qubit $\ket{\phi}$ at the end of the
protocol and on $f(x,y)=1$, Bob should hold it.

As a simple example consider the case where $f(x,y)=x\oplus y$, the
XOR function, with 1-bit inputs $x$ and $y$. Alice and Bob
then have the following way of performing this task perfectly by using
a pre-shared quantum state consisting of three EPR pairs (three
ebits). Label the first two EPR pairs $0$
and $1$. Alice teleports\footnote{See
  Appendix~\ref{sec:teleportation} for a brief introduction to quantum
  teleportation.} $\ket{\phi}$ to Bob using the pair labeled
with her input $x$. This yields measurement result $i\in\{0,1,2,3\}$,
while Bob teleports his half of the EPR pair labeled $y$ to Alice
using his half of the third EPR pair while obtaining measurement
outcome $j \in \set{0,1,2,3}$ . In the round of simultaneous
communication, both players send the classical measurement results and
their inputs $x$ or $y$ to the other player. If $x\oplus y=1$,
i.e.~$x$ and $y$ are different bits, Bob can apply the Pauli operator
$\sigma_{i}$ to his half of the EPR pair labeled $x=y\oplus 1$,
correctly recovering $\ket{\phi}$. Similarly, if $x\oplus y=0$, it is
easy to check that Alice can recover the qubit by applying
$\sigma_{i}\sigma_{j}$ to her half of the third EPR pair.

If Alice and Bob are {\em constrained} to the types of actions in the
example above, i.e., if they are restricted to teleporting the quantum
state back and forth depending on their classical inputs, there is a one-to-one correspondence between attacking the position-verification scheme $\PVqubit$ and computing the function $f$ in the garden-hose model. The quantum strategy for attacking $\PVqubit$ in the
example above exactly corresponds to the strategy depicted in
Figure~\ref{fig:xor} for computing the XOR-function in the garden-hose
model. 

More generally, we can translate any strategy of Alice and Bob
in the garden-hose model to a perfect quantum attack of $\PVqubit$ by
using one EPR pair per pipe and performing Bell measurements where the
players connect the pipes.

Our hope is that also the converse is true: if many pipes are required
to compute $f$ (say we need super-polynomially many), then the number
of EPR pairs needed for Alice and Bob to successfully break $\PVqubit$
with probability close to $1$ by means of an {\em arbitrary} attack
(not restricted to Bell measurements on EPR pairs) should also be
super-polynomial.

The examples of (partial) functions from Theorem~\ref{thm:separations}
show that the classical garden-hose complexity $\gh(f)$ does not
capture the amount of EPR pairs required to attack $\PVqubit$. It is
conceivable that one can show that arbitrary attacks can be cast in
the quantum garden-hose model and hence, the quantum garden-hose
complexity $\ghqr(f)$ (or a variant of it\footnote{In addition to the
  number of pipes, one might have to account for the size of the
  entangled state as well.})  correctly captures the amount of EPR
pairs required to attack $\PVqubit$. We leave this question as an
interesting problem for future research.

We stress that for this application, any polynomial lower bound on the
number of required EPR pairs is already interesting.

\subsection{Lower Bounds on Quantum Resources to Perfectly Attack
  $\PVqubit$} \label{sec:lowerbound} In Appendix~\ref{app:lowerbound},
we show that for a function that is injective for Alice or injective
for Bob (according to Definition~\ref{def:injective}), the dimension
of the quantum state the adversaries need to handle (including
possible quantum communication between them) in order to attack
protocol $\PVqubit$ perfectly has to be of order at least linear in
the classical input size $n$. In other words, they require at least a
logarithmic number of qubits in order to successfully attack
$\PVqubit$.
\begin{theorem}
  Let $f$ be injective for Bob. Assume that Alice and Bob perform a
  perfect attack on protocol $\PVqubit$. Then, the dimension $d$ of
  the overall state (including the quantum communication) is in
  $\operatorname\Omega(n)$.
\end{theorem}

In the last subsection, we show that there exist functions for which
perfect attacks on $\PVqubit$ requires the adversaries to handle a
polynomial amount of qubits.
\begin{theorem}
  For any starting state $\ket{\psi}$ of dimension $d$, there exists a
  Boolean function $f$ on inputs $x,y \in \set{0,1}^n$ such that any perfect
  attack on $\PVqubit$ requires $d$ to be exponential in $n$.
\end{theorem}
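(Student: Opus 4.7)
The plan is to adapt the counting argument from Proposition~\ref{prop:expbound} to the quantum setting via an $\eps$-net over CPTP maps. Fix a starting state $\ket{\psi}$ of dimension $d$. A perfect attack on $\PVqubit$ using $\ket{\psi}$ is described by Alice's quantum instrument $\mathcal{A}_x$ for each $x \in \set{0,1}^n$, Bob's quantum instrument $\mathcal{B}_y$ for each $y \in \set{0,1}^n$, and local recovery maps applied after the exchange of classical messages. All of these channels act on Hilbert spaces of dimension $O(d)$. I would first observe that the classical alphabets of the exchanged messages can be taken to have size $\mathrm{poly}(d)$ without loss of generality, since outcomes inducing identical post-measurement behaviour can always be merged.

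Next I would invoke a standard $\eps$-net in the diamond norm on the set of CPTP maps between $O(d)$-dimensional systems; such a net has size $(3/\eps)^{O(d^4)}$. Rounding each of Alice's and Bob's $2 \cdot 2^n$ instruments and recovery maps to the nearest net point yields at most
\[
    N(d,n,\eps) \;=\; 2^{O(\mathrm{poly}(d) \cdot 2^n \cdot \log(1/\eps))}
\]
distinct rounded attack descriptions. The crucial step is then to show that, for a sufficiently small absolute constant $\eps_0 > 0$, each rounded description is $\eps_0$-close to a perfect attack for at most one function $f$. Indeed, if a single rounded strategy were $\eps_0$-close to perfect attacks on two functions $f_1 \neq f_2$ that disagree at some $(x^*, y^*)$, then on this input the strategy would have to approximately deliver an arbitrary input qubit $\ket{\phi}$ to Alice (to emulate $f_1$'s attack) and to Bob (to emulate $f_2$'s attack) simultaneously, violating the no-cloning theorem once $\eps_0$ is below the distinguishability gap of the two scenarios.

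Putting the bounds together, the number of Boolean functions $f: \set{0,1}^n \times \set{0,1}^n \to \set{0,1}$ admitting a perfect attack using $\ket{\psi}$ is at most $N(d,n,\eps_0) = 2^{O(\mathrm{poly}(d) \cdot 2^n)}$. Since there are $2^{2^{2n}}$ Boolean functions in total, pigeonhole forces some $f$ to admit no perfect attack using $\ket{\psi}$ unless $\mathrm{poly}(d) \cdot 2^n \geq 2^{2n}$, i.e.\ unless $d$ is exponential in $n$. This is precisely the claim.

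The main obstacle I anticipate is step three: quantifying how a diamond-norm rounding error of $\eps_0$ on each local channel propagates to an error on the overall attack's success against a worst-case input qubit $\ket{\phi}$, and lower-bounding by a function-independent constant the distinguishability gap between ``the qubit emerges at Alice's side'' and ``the qubit emerges at Bob's side''. A secondary subtlety is rigorously justifying the WLOG bound of $\mathrm{poly}(d)$ on the classical alphabet size, which is needed to make the set of instruments compact before discretization.
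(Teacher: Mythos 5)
Your overall strategy coincides with the paper's: fix $\ket{\psi}$, discretize the space of local attack strategies into a number of classes that grows like $2^{O(\mathrm{poly}(d)\cdot 2^n)}$, argue via a no-cloning-type rigidity statement that each class can serve at most one Boolean function, and then pigeonhole against the $2^{2^{2n}}$ functions. The difference is in the implementation of the discretization. The paper does not build a net over channels at all: it covers the unit sphere of the $d$-dimensional state space with $K=(1/\eps)^{O(d)}$ patches, assigns each local unitary $U^x$ (resp.\ $V^y$) to the patch containing $U^x\ket{\psi}$ (resp.\ $V^y\ket{\psi}$), and shows (by adding angles, using that Alice's and Bob's unitaries commute) that two combined strategies from the same patch pair produce global states with inner product at least $1/2$. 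Lemma~\ref{lem:innerproduct} then says that no two such states can respectively allow local extraction of the qubit at Alice's side and at Bob's side, so the patch pair determines $f(x,y)$ \emph{for every possible choice of recovery operation}. This is lighter than your route: it needs only a sphere covering (as in Lemma~\ref{lem:squeeze}) and an exact inner-product bound, whereas you need a diamond-norm net of size $(3/\eps)^{O(d^4)}$, an approximate/robust no-cloning statement, and error propagation through channel composition. All of that is doable, and the $d^4$ versus $d$ in the exponent is immaterial for the final bound, but it is more machinery for the same conclusion.

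There is one step in your write-up that, as literally stated, would break the count: you propose to round ``Alice's and Bob's $2\cdot 2^n$ instruments \emph{and recovery maps}'' to net points. The recovery maps are applied after both classical inputs have been exchanged, so they are indexed by the pair $(x,y)$; there are $2^{2n}$ of them, and including them in the net would inflate your bound to $N = 2^{O(\mathrm{poly}(d)\cdot 2^{2n})}$, which is no longer smaller than $2^{2^{2n}}$ and the pigeonhole collapses. The fix is exactly what your step three (and the paper's Lemma~\ref{lem:innerproduct}) already provides: the recovery maps must be existentially quantified away, not counted. Once the pre-communication strategies are rounded, the no-cloning rigidity says that ``some local recovery delivers the qubit to Alice'' and ``some local recovery delivers the qubit to Bob'' are mutually exclusive for the resulting post-communication state, so the function value at $(x,y)$ is already pinned down by the rounded pair and no net over recoveries is needed. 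With that correction, and with the standard observations you anticipate (Kraus-rank bounds to cap the instrument outcome alphabet at $\mathrm{poly}(d)$, additivity of diamond-norm errors under composition, and a constant, function-independent distinguishability gap between the two delivery scenarios), your argument goes through and reproduces the counting inequality of Proposition~\ref{prop:expbound} in the form $2^{2^{2n}} \leq 2^{O(\mathrm{poly}(d)\cdot 2^n)}$, forcing $d$ to be exponential in $n$.
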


These results can be seen as first steps towards establishing the
desired relation between classical difficulty of honest actions and
quantum difficulty of the actions of dishonest players. We leave as
future work the generalization of these lower bounds to the more
realistic case of imperfect attacks and also to more relevant
quantities like some entanglement measure between the players (instead
of the dimension of their shared state).

\section{Conclusion and Open Questions}  \label{sec:openquestions}

The garden-hose model is a new model of communication complexity. We
connected functions with polynomial garden-hose complexity to a newly
defined class of log-space computations with local
pre-processing. Alternatively, the class $\Ltwo$ can also be viewed as
the set of functions which can be decided in the
simultaneous-message-passing (SMP) model where the referee is
restricted to log-space computations. Many open questions remain. Can
we find better upper and lower bounds for the garden-hose complexity
of the studied functions? The constructions given in~\cite{Speelman11}
still leave a polynomial gap between lower and upper bounds for many
functions. It would also be interesting to find an explicit function
for which the garden-hose complexity is provably super-linear or even
exponential, the counting argument in Proposition~\ref{prop:expbound}
only shows the existence of such functions.  It is possible to extend
the basic garden-hose model in various ways and consider settings with
more than two players, non-Boolean functions or multiple water
sources.  Furthermore, it is interesting to relate our findings to
very recent results about space-bounded communication
complexity~\cite{BCPSS12}.

Garden-hose complexity is a tool for the analysis of a specific
scheme for position-based quantum cryptography. This scheme requires
the honest prover to work with only a single qubit, while the
dishonest provers potentially have to manipulate a large quantum
state, making it an appealing scheme to further examine.  The
garden-hose model captures the power of attacks that only use
teleportation, giving upper bounds for the general scheme, and lower
bounds when restricted to these attacks.

An interesting additional restriction on the garden-hose model would
involve limiting the computational power of Alice and Bob. For example
to polynomial time, or to the output of quantum circuits of polynomial
size. Bounding not only the amount of entanglement, but also the
amount of computation with a realistic limit might yield stronger
security guarantees for the cryptographic schemes.

\section{Acknowledgments}  \label{sec:acknowledgements}
HB is supported by an NWO Vici grant and the EU project
QCS. FS is supported by the NWO DIAMANT project.
CS is supported by an NWO Veni grant. We thank Louis Salvail for
useful discussions about the protocol $\PVqubit$.

\appendix
\section{Upper Bound by Communication Complexity}\label{sec:ccupperbound}
We show that the garden-hose complexity
$\gh(f)$ of any function $f$ is at most $2^{D(f)+1} - 1$, where $D(f)$
is the deterministic communication complexity
of $f$.

Consider a protocol where Alice and Bob alternate in sending one bit.
The pipes between Alice and Bob are labeled with all possible
non-empty strings of length up to $D(f)$, with one extra reserve pipe.

Let $A_v(x)$ be the bit Alice sends after seeing transcript $v \in \set{0,1}^*$ given input $x$ and let $B_v(x)$ be
the bit Bob sends after a transcript $v$ on input $y$. (Since Alice and Bob alternate, Alice sends a bit
on even length transcripts, while Bob sends when the transcript has odd length.)
Alice connects the tap to $0$ or $1$ depending on the first sent bit. Then, Alice
makes connections 
\[
\{\{v, v A_v(x)\} | v \in \set{0,1}^* \text{with } |v| \text{ even and } 1 \leq |v| \leq D(f)\} \,\text{.}
\]
Here $v A_v(x)$ is the concatenation of $v$ and $A_v(x)$.
Bob's connections are given by the set
$$\set{\set{v, v B_v(x)} | v \in \set{0,1}^* \text{with } |v| \text{
    odd and } 1 \leq |v| \leq D(f)} \, .$$
Now, for all transcripts of length $D(f)$, Alice knows the function outcome. (Assume $D(f)$ is even for simplicity.)
For those $2^{D(f)}$ pipes she can route the water to the correct side by connecting similar outcomes, as in the proof of
Proposition~\ref{prop:exp}, using one extra reserve pipe. This brings the total used pipes to
$1+\sum_{i=1}^{D(f)} 2^i = 2^{D(f)+1}-1$.
The correctness can be verified by comparing the path of the water to the communication protocol: the label
of the pipe the water is in, when following it through the pipes for $r$ ``steps'', is exactly the same as
the transcript of the communication protocol when executing it for $r$ rounds.

\section{Proofs}
\subsection{Proof of Theorem~\ref{thm:logspace}}
{\sc Theorem~\ref{thm:logspace}} \emph{
  If $f: \set{0,1}^n \times \set{0,1}^n \to \set{0,1}$ is log-space computable, 
then $\gh(f)$ is polynomial in $n$. 
}
\label{proof:thm:logspace}
\begin{proof}
Let $M$ be a deterministic Turing machine deciding $f(x,y)=0$. We
assume that $M$'s read-only input tape is of length $2n$ and contains
$x$ on positions $1$ to $n$ and $y$ on positions $n+1$ to $2n$. By
assumption $M$ uses logarithmic space on its work tapes.

In this proof, a \emph{configuration} of $M$ is the location of its
tape heads, the state of the Turing machine and the content of its
work tapes, excluding the content of the read-only input tape.  This
is a slightly different definition than usual, where the content of
the input tape is also part of a configuration. When using the normal
definition (which includes the content of all tapes), we will use the term
\emph{total configuration}. Any configuration of $M$ can be described
using a logarithmic number of bits, because $M$ uses logarithmic
space.

A Turing machine is called \emph{deterministic}, if every total
configuration has a unique next one. A Turing machine is called
\emph{reversible} if in addition to being deterministic, every total configuration also has a unique predecessor.
An $S(n)$ space-bounded deterministic
Turing machine can be simulated by a reversible Turing machine in
space $O(S(n))$~\cite{Lange1998}.
This means that without
loss of generality, we can assume $M$ to be a reversible Turing
machine, which is crucial for our construction. Let $M$ also be
\emph{oblivious}\footnote{A Turing machine is called \emph{oblivious},
  if the movement in time of the heads only depend on the length of the input, known in advance to be $2n$, but
  not on the input itself. For our construction we only require the input tape head to have this property.} in the tape head movement on the input tape. This can be done with only a small increase in space by adding a
counter. 

Alice's and Bob's perfect strategies in the garden-hose game are as
follows. They list all configurations where the head of the input tape
is on position $n$ coming from position $n+1$. Let us call the set of
these configurations $C_{A}$. Let $C_{B}$ be the analogous set of
configurations where the input tape head is on position $n+1$ after
having been on position $n$ the previous step. Because $M$ is
oblivious on its input tape, these sets depend only on the function
$f$, but not on the input pair $(x,y)$.  The number of elements of $C_A$
and $C_B$ is at most polynomial, being exponential in the description
length of the configurations.  Now, for every element in $C_{A}$ and
$C_{B}$, the players label a pipe with this configuration.  Also label $|C_{A}|$
pipes $\accept$ and $|C_{B}|$ of them $\reject$.  These steps determine
the number of pipes needed, Alice and Bob can do this labeling
beforehand.

For every configuration in $C_{A}$, with corresponding pipe $p$, Alice
runs the Turing machine starting from that configuration until it
either accepts, rejects, or until the input tape head reaches position
$n+1$. If the Turing machine accepts, Alice connects $p$ to the first free pipe labeled $\accept$.
On a reject, she leaves $p$ unconnected. If
the tape head of the input tape reaches position $n+1$, she connects
$p$ to the pipe from $C_{B}$ corresponding to the configuration of the
Turing machine when that happens. By her knowledge of $x$, Alice knows
the content of the input tape on positions $1$ to $n$, but not the
other half.
Alice also runs $M$ from the starting configuration, connecting the
water tap to a target pipe with a configuration from $C_{B}$
depending on the reached configuration.

Bob connects the pipes labeled
by $C_{B}$ in an analogous way: He runs the Turing machine starting
with the configuration with which the pipe is labeled until it halts
or the position of the input tape head reaches $n$. On accepting, the
pipe is left unconnected and if the
Turing machine rejects, the pipe is connected to one of the pipes labeled $\reject$. Otherwise, the
pipe is connected to the one labeled with the configuration in
$C_{A}$, the configuration the Turing machine is in when the head on
the input tape reached position $n$.

In the garden-hose game, only one-to-one connections of pipes are
allowed. Therefore, to check that the described strategy is a valid one,
the simulations of two different configurations
from $C_A$ should never reach the same configuration in
$C_B$. This is guaranteed by the reversibility of $M$ as follows. 
Consider Alice simulating $M$ starting from different configurations $c
\in C_A$ and $c' \in C_A$. We have to check that their simulation can not end at
the same $d \in C_B$, because Alice can not connect both pipes labeled
$c$ and $c'$ to the same $d$. Because $M$ is reversible, we can in principle also simulate $M$ backwards in time starting
from a certain configuration. In particular, Alice can simulate $M$ backwards starting with configuration $d$,
until the input tape head position reaches $n+1$. The configuration of $M$ at that time can not simultaneously
be $c$ and $c'$, so there will never be two different pipes trying to connect to the pipe labeled $d$.

It remains to show that, after the players link up their pipes as described,
the water comes out on Alice's side if $M$ rejects on input $(x,y)$,
and that otherwise the water exits at Bob's.  We can verify the correctness of the described strategy
by comparing the flow of the water directly to the execution of $M$. Every pipe the
water flows through corresponds to a configuration of $M$ when it runs starting from the
initial state. So the side on which the water finally exits also
corresponds to whether $M$ accepts or rejects.
\qed \end{proof}

\subsection{Proof of Proposition~\ref{prop:amplification}}
\label{app:amplification}
{\sc Proposition~\ref{prop:amplification}} \emph{
 Let $f:\set{0,1}^n \times \set{0,1}^n \to \set{0,1}$ be a function
 such that $\ghr(f)$ is polynomial in $n$, with error $\eps \leq
 \frac{1}{2}-n^{-c}$ for a constant $c>0$.
 For every constant $d>0$ there exists a polynomial $q(\cdot)$ such that
 $\gh_{2^{-n^d}}(f) \leq q\big(\ghr(f) \big)$.
}
\begin{proof}
The new protocol $G'_r(x,y)$ takes the majority of $k=8n^{2c+d}$
outcomes of $G_{r_i}(x,y)$ where $r_1,\ldots,r_k$ are $k$
independent and uniform samples of the random string. We have to
establish (i) that taking the majority of $k$ instances of the original protocol
indeed gives the correct outcome with probability at least $1-2^{-n^d}$ and (ii) that $G'_r(x,y)$ requires only polynomial pipes.

\begin{list}{(\labelitemi)}{\leftmargin=1.5em}
\item[(i)] Let $X_i$ be the random variable that equals $1$ when
  $G_{r_i}(x,y) = f(x,y)$ and $0$ otherwise. Note that the $X_i$ are
  independent and identically distributed random variables with
  expectation $E[X_i] \geq 1-\eps =: p$.  Whenever $\sum_{i=1}^k X_i \geq
  \frac{k}{2}$ the protocol gives the correct outcome. Use the
  Chernoff bound to get
 \[
  \Pr\left[\sum_{i=1}^k X_i < (1 - \zeta) p k\right] \leq e^{-\frac{\zeta^2}{2} p k}
 \]
for any small $\zeta$. Picking $\zeta=n^{-c}$, so that $(1-\zeta)p k$ is still greater than $\frac{k}{2}$,
and filling in $k$, we can upper bound the probability of failure by
\[
 e^{-\frac{8 n^{2c+d}}{2 n^{2c}}p} \leq 2^{-n^d}
\]

 \item[(ii)]
In Theorem~\ref{thm:logspace} we show that any log-space computable function can be simulated by a polynomial-sized garden-hose strategy.
Thus, if checking the majority of $k$ garden-hose strategies can be done in logarithmic space (after local pre-computations by Alice
and Bob), then $G'_r(x,y)$ can be computed using a polynomial number of pipes.

Let $A_i = E_{A_\circ}(x,r_i)$ be the local wiring of Alice for strategy $G$ on input $x$ with randomness $r_i$, and let $B_i=E_{B}(y,r_i)$.
Alice locally generates $(A_1, \ldots, A_k)$ and Bob locally generates $(B_1, \ldots, B_k)$.
In the proof of Proposition~\ref{prop:logspaceinv} it was shown that simulating the outcome of a single garden-hose strategy $(A_i,B_i)$ can be done in logarithmic space.
Here we follow the same construction, but instead of getting the outcome of a single strategy we simulate all $k$ strategies.
This can still be done in logarithmic space, since we can re-use the memory needed to simulate each of the $k$ strategies.
To find the majority, we need to add a counter to keep track of the simulation outcomes, using only an extra $\log k$ bits of space.
\qed
\end{list}

\end{proof}

\section{Quantum Preliminaries} \label{sec:qpreliminaries}
For Appendices~\ref{app:separations} and~\ref{app:lowerbound}, we
assume that the reader is familiar with basic concepts of quantum
information theory. We refer to~\cite{NC00} for an introduction and
merely fix some notation here.

\subsection{Quantum Teleportation} \label{sec:teleportation}
An important example of a $2$-qubit state is the {\em EPR pair}, which
is given by $\ket{\Phi}_{AB} = (\zero_A\zero_B + \one_A\one_B)/\sqrt{2} \in
\H_A \otimes \H_B = \C^2 \otimes \C^2$ and has the following
properties: if qubit $A$ is measured in the computational basis, then
a uniformly random bit $x \in \set{0,1}$ is observed and qubit $B$
collapses to $\ket{x}$. Similarly, if qubit $A$ is measured in the
Hadamard basis, then a uniformly random bit $x \in \set{0,1}$ is
observed and qubit $B$ collapses to $H\ket{x}$.

The goal of quantum teleportation is to transfer a quantum state from
one location to another by only communicating classical information.
Teleportation requires pre-shared entanglement among the two
locations.  To teleport a qubit~$Q$ in an arbitrary unknown
state~$\ket{\psi}_Q$ from Alice to Bob, Alice performs a
Bell-measurement on $Q$ and her half of an EPR pair, yielding a
classical measurement outcome $k \in \set{0,1,2,3}$.  Instantaneously,
the other half of the corresponding EPR pair, which is held by Bob,
turns into the state $\PC_k \ket{\psi}$, where $\PC_0, \PC_1,
\PC_2, \PC_3$ denote the four Pauli-corrections $\set{\id,X,Z,XZ}$,
respectively. The classical information $k$ is then communicated
to Bob who can recover the state $\ket{\psi}$ by performing $\PC_k$ on
his EPR half. 

\section{Separations between Quantum and Classical
  Garden-Hose Complexity} \label{app:separations}

\subsection{Deterministic Setting}
Using techniques from~\cite{BCW98}, we show a separation between the
garden-hose model and the quantum garden-hose model in the
deterministic setting for the function $EQ'$, defined as:
\[
 EQ'(x,y) = \left\{ 
  \begin{array}{l l}
    1 & \quad \mathrm{if} \; \Delta (x,y) = 0 \, , \\
    0 & \quad \mathrm{if} \; \Delta(x,y) = n/2 \, , \\
  \end{array} \right.
\]
where $\Delta(x,y)$ denotes the Hamming distance between two $n$-bit strings $x$ and $y$.
We show that the zero-error quantum garden-hose complexity of $EQ'$ is logarithmic in the input length.

\begin{theorem}
 $\ghq (EQ') \in O(\log n)$.
\end{theorem}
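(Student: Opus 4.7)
The plan is to use quantum fingerprinting in the spirit of~\cite{BCW98} to compress the $n$-bit inputs down to $(\log n)$-bit strings whose equality is preserved under the $EQ'$ promise, and then to invoke the classical garden-hose equality protocol of Proposition~\ref{prop:upperbounds} on the compressed strings.

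First, I would have Alice and Bob pre-share $\log n$ EPR pairs, i.e., the maximally entangled state $\ket{\Phi_n}=\frac{1}{\sqrt{n}}\sum_{i=1}^n \ket{i}_A\ket{i}_B$ of dimension $n$. Alice applies the input-dependent diagonal unitary $U_x := \sum_i (-1)^{x_i}\ket{i}\!\bra{i}$ to her half and then measures in the Hadamard basis to obtain an outcome $a\in\set{0,1}^{\log n}$; Bob does the same with $U_y$ and obtains $b\in\set{0,1}^{\log n}$. A short Fourier computation shows that after both unitaries are applied the shared state equals $\frac{1}{\sqrt{n}}\sum_i (-1)^{z_i}\ket{i}_A\ket{i}_B$ with $z=x\oplus y$, and that the probability of the joint outcome $(a,b)$ is proportional to $|\sum_i (-1)^{z_i+i\cdot(a\oplus b)}|^2$. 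In particular, summing over $a=b$ yields a total probability proportional to $|\sum_i (-1)^{z_i}|^2$, which vanishes exactly when $\wt(z)=n/2$ and is maximal when $z=0$. Hence under the $EQ'$ promise we have $a=b$ with certainty when $x=y$, and $a\neq b$ with certainty when $\Delta(x,y)=n/2$.

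Having reduced the problem to testing equality of the two $(\log n)$-bit measurement outcomes $a$ and $b$, Alice and Bob now apply the classical garden-hose equality protocol of Proposition~\ref{prop:upperbounds} to $a$ and $b$. Alice's wiring depends on $(x,a)$ only through $a$, and Bob's on $(y,b)$ only through $b$. The resulting game uses $3\log n+1=O(\log n)$ pipes and routes the water to Bob's side exactly when $a=b$ (i.e., when $EQ'(x,y)=1$) and to Alice's side exactly when $a\neq b$ (i.e., when $EQ'(x,y)=0$), yielding a zero-error quantum garden-hose protocol of size $O(\log n)$. The only calculation of real substance is the Fourier identity establishing the perfect correlation between $a=b$ and $x=y$ under the promise; once this is in place, the argument is a direct composition of the quantum preprocessing with an existing classical equality protocol.
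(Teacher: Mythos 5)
Your proposal is correct and follows essentially the same route as the paper's proof: distribute the phases $x_i$, $y_i$ onto a shared $\log n$-qubit maximally entangled state, apply Hadamards and measure to obtain $a,b$ satisfying $EQ'(x,y)=EQ(a,b)$ with certainty under the promise, then run the linear-size classical garden-hose equality protocol on the $(\log n)$-bit outcomes.
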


\begin{proof}
  Alice and Bob start with the fully entangled quantum state of $\log
  n$ qubits, i.e.~with
  $\frac{1}{\sqrt{n}}\sum_{i=0}^{n-1}\ket{i}\ket{i}$. Counting indices
  of the input bits from 0 to $n-1$, Alice gives a phase of $-1$ to state
  $\ket{i}$ whenever $x_{i}=0$ and Bob does the same thing with his
  half when the bit $y_{i}=0$, yielding the state
\[
\frac{1}{\sqrt{n}}\sum_{i=0}^{n-1}(-1)^{x_{i}+y_{i}}\ket{i}\ket{i} \, .
\]

After both Alice and Bob perform a Hadamard transformation on their
qubits, we obtain
\[
\frac{1}{n \sqrt{n}}\sum_{i}\sum_{a,b}(-1)^{x_{i}+y_{i}}(-1)^{a\cdot
  i}(-1)^{b\cdot i}\ket{a}\ket{b} \, .
\]

So the probability $p_{a,b}$ of obtaining outcome $a,b$ when
measuring in the computational basis is 
\[
p_{a,b} = \frac{{1}}{n^3} \left| \sum_{i}(-1)^{x_{i}+y_{i}+(a+b)\cdot i} \right|^2
\]

If $x=y$, then $p_{a,b}=0$ wherever $a\neq b$.  If $\Delta(x,y)=n/2$,
then $p_{a,b}=0$ wherever $a=b$.  It follows that $EQ'(x,y)=EQ(a,b)$
--- determining the equality of the $n$-bit strings $x$ and $y$ is
equivalent to computing the equality of the $\log(n)$-bit strings $a$
and $b$. The garden-hose protocol for equality needs a number of pipes
that is linear in the input size.  After the quantum steps above,
Alice and Bob can use $O(\log
{n})$ water pipes to compute $EQ(a,b)$. \qed
\end{proof}

We can also show that the deterministic classical garden-hose complexity has an almost-linear lower bound.
\begin{theorem}
 $\gh(EQ') \in \Omega(\frac{n}{\log n})$
\end{theorem}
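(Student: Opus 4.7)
The plan is to adapt the injectivity-based counting argument from the proposition after Definition~\ref{def:injective} to the partial function $EQ'$, in combination with a Frankl--Rödl-type bound on the independence number of the ``distance-$n/2$'' graph on the Boolean cube. Fix any garden-hose protocol computing $EQ'$ with $s$ pipes, and for every partial matching $M$ on Bob's $s$ vertices set $Y_M := \set{y \in \set{0,1}^n : E_B(y) = M}$. The $Y_M$ partition $\set{0,1}^n$ into at most $s^s = 2^{s \log s}$ non-empty fibers (by exactly the counting used in the earlier proposition). The key observation is that each $Y_M$ must avoid pairs at Hamming distance exactly $n/2$: if $y \neq y'$ both lie in the same fiber, then the graphs $G(y,y)$ and $G(y,y')$ are identical (same $E_{A_\circ}(y)$ on Alice's side, same $M$ on Bob's side), so the water exits on the same side in both cases. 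But $EQ'(y,y) = 1$ forces exit on Bob's side, whereas $\Delta(y,y') = n/2$ would force $EQ'(y,y') = 0$ and hence exit on Alice's side, a contradiction. So each $Y_M$ is an independent set of the graph $G_n$ on $\set{0,1}^n$ whose edges are exactly the pairs at Hamming distance $n/2$.

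The quantitative heart of the argument is the Frankl--Rödl theorem on forbidden Hamming distances, which provides an absolute constant $\delta > 0$ such that, for all sufficiently large even $n$, any subset of $\set{0,1}^n$ with no two elements at Hamming distance exactly $n/2$ has size at most $2^{(1-\delta)n}$; in our notation, $\alpha(G_n) \leq 2^{(1-\delta)n}$. Combining this with the fiber partition by pigeonhole,
\[
 2^{s \log s} \;\geq\; \big|\set{M : Y_M \neq \emptyset}\big| \;\geq\; \frac{2^n}{\alpha(G_n)} \;\geq\; 2^{\delta n},
\]
which gives $s \log s = \Omega(n)$ and hence $s = \Omega(n / \log n)$, as required.

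The only non-routine step is invoking Frankl--Rödl; the rest is exactly the counting/pigeonhole scheme already used for total functions like $EQ$ and $IP$. I expect a bound of this strength to be genuinely needed here: without it, one merely concludes that $E_B$ induces a proper colouring of $G_n$, and the elementary clique given by the $n$ Hadamard codewords only yields $\chi(G_n) \geq n$, giving the much weaker lower bound $\gh(EQ') = \Omega(\log n / \log \log n)$. It is the exponential forbidden-distance bound on $\alpha(G_n)$ (equivalently, on independent sets in the orthogonality graph of the Boolean cube) that upgrades this to the near-linear $\Omega(n/\log n)$ claimed.
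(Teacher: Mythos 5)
Your proof is correct, and it takes a more self-contained route than the paper, which simply combines two black boxes: Theorem~1.7 of~\cite{BCW98} (the $\Omega(n)$ lower bound on the zero-error communication complexity of $EQ'$) and the generic reduction of Proposition~\ref{prop:commtogh} stating $D^{1}(f)\le \gh(f)\log(\gh(f))$. What you do instead is adapt the ``injective for Bob'' counting argument following Definition~\ref{def:injective} directly to the partial function: the fibers $Y_M$ of $E_B$ partition $\{0,1\}^n$ into at most $2^{s\log s}$ classes, and your choice of Alice's input $x=y$ correctly shows that each fiber is an independent set of the distance-$n/2$ graph, since $G(y,y)=G(y,y')$ forces the water to the same side while $EQ'(y,y)=1$ and $EQ'(y,y')=0$. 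From there Frankl--R\"odl gives $2^{s\log s}\ge 2^n/\alpha(G_n)\ge 2^{\delta n}$, which is exactly the claimed bound. The two arguments are cousins rather than strangers: the cited communication lower bound in~\cite{BCW98} is itself proved by bounding monochromatic rectangles via Frankl--R\"odl, so the same combinatorial engine sits at the bottom of both; your version essentially unrolls the reduction and applies the rectangle argument one-sidedly to Bob's wirings. What your route buys is transparency (no detour through zero-error protocols, and a clear identification of where the exponential independence bound is indispensable, as your remark about the Hadamard clique only yielding $\chi(G_n)\ge n$ and hence $s=\Omega(\log n/\log\log n)$ correctly shows); what the paper's route buys is brevity and reuse of Proposition~\ref{prop:commtogh}, which also handles the randomized and quantum analogues uniformly. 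The only technicality worth a sentence in a write-up is that the forbidden-distance form of Frankl--R\"odl is usually stated for $n\equiv 0\pmod 4$, which is harmless here since the statement is asymptotic and $EQ'$ is only defined for even $n$.
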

\begin{proof}
Theorem 1.7 of \cite{BCW98} shows that the zero-error classical
communication complexity of $EQ'$ is lower bounded by $\Omega(n)$. The
statement then follows from Proposition~\ref{prop:commtogh}. \qed
\end{proof}

\subsection{Randomized Setting}
The Noisy Perfect Matching problem (NPM) is a variant of the Boolean Hidden
Matching introduced in~\cite{GKKRdW07} where they prove an exponential
gap between the classical one-way communication complexity and the
quantum one-way communication complexity of NPM. We adapt the given
quantum one-way protocol to our setting, showing that the quantum
garden-hose complexity is only logarithmic.  This gives a
separation between the classical and quantum garden-hose complexity of
a partial function in the randomized setting.

The NPM problem is described as follows:\footnote{For this example,
  we deviate from the earlier convention of giving two $n$-bit strings as
  input to the players.}
\begin{description}
\item[Alice's input:] $x \in \set{0,1}^{2n}$.
\item[Bob's input:] a perfect matching $M$ on $\set{1,\ldots,2n}$ and a string $w
  \in \set{0,1}^n$.  The matching $M$ consists of $n$ edges, $e_1 =
  (i_1,j_1), \ldots, e_n = (i_n, j_n)$.
\item[Promise:] $\exists b\in \set{0,1}$ such that $\Delta(M\cdot x
  \, \oplus \, b^n, w) \leq n/3$, where $\Delta(\cdot,\cdot)$ is the Hamming
  distance and the $k$-th bit of the $n$-bit string $M\cdot x$ equals $x_{i_k} \oplus x_{j_k}$.

\item[Function value:] $b$.
\end{description}
Informally, the question asked is whether the parity on the edges of
$M$, where the vertices are entries of $x$, is close to the parities
specified by $w$, or not.

\begin{theorem}
 $\ghq(\mathrm{NPM}) \in O(\log n)$.
\end{theorem}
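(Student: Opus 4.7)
The plan is to adapt the $O(\log n)$-qubit one-way quantum protocol of~\cite{GKKRdW07} for the Boolean Hidden Matching problem (of which NPM is a noisy promise variant) to the entanglement-assisted garden-hose setting. Instead of Alice sending the quantum message $\ket{\psi_x} = \frac{1}{\sqrt{2n}}\sum_i (-1)^{x_i}\ket{i}$ to Bob, Alice and Bob pre-share the maximally entangled state of dimension $2n$, each perform a suitable local measurement, and then use a short garden-hose subroutine to compute a simple classical post-processing function of their outcomes.

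Concretely, Alice and Bob pre-share $\ket{\Phi} = \frac{1}{\sqrt{2n}}\sum_{i=1}^{2n} \ket{i}_A \ket{i}_B$, which consumes only $\log(2n) \in O(\log n)$ qubits per side. Alice applies the diagonal phase unitary $D_x = \sum_i (-1)^{x_i}\ketbra{i}{i}$ to her half and then measures it in the Walsh--Hadamard basis, obtaining a uniformly random outcome $y \in \set{0,1}^{\log(2n)}$. Bob measures his half in the $M$-dependent basis $\bigl\{\frac{1}{\sqrt{2}}(\ket{i_k} + (-1)^c \ket{j_k}) : k \in [n], c \in \set{0,1}\bigr\}$ and obtains some $(k,c')$. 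A direct computation shows that
\[
c' \,=\, (x_{i_k} \oplus x_{j_k}) \,\oplus\, y \cdot (i_k \oplus j_k),
\]
so the corrected bit $c := c' \oplus y \cdot (i_k \oplus j_k)$ equals $x_{i_k} \oplus x_{j_k}$ as in the original BHM protocol. Consequently, the ``vote'' $v := c \oplus w_k$ equals $b$ with probability at least $2/3$ under the NPM promise.

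It then remains to compute $v = v' \oplus y \cdot z$ in the garden-hose model, where $v' := c' \oplus w_k$ and $z := i_k \oplus j_k \in \set{0,1}^{\log(2n)}$ are both known to Bob, while $y$ is Alice's input. By having Alice append a $1$ to her string and Bob append $v'$ to his, this reduces to a bitwise inner product on strings of length $\log(2n)+1$, which by Proposition~\ref{prop:upperbounds} has garden-hose complexity at most $4(\log(2n)+1)+1 \in O(\log n)$. Water exits on Alice's side when the computed bit is $0$ and on Bob's side when it is $1$, yielding a quantum garden-hose protocol for NPM with $O(\log n)$ pipes and error at most $1/3$.

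The only step requiring actual calculation is the verification of the identity $c' = (x_{i_k} \oplus x_{j_k}) \oplus y \cdot (i_k \oplus j_k)$: intuitively, Alice's uniformly random Walsh--Hadamard outcome $y$ plays the role of a one-time pad masking the BHM parity bit on Bob's side, and this mask is precisely the inner product that the cheap garden-hose subroutine then unmasks. Once this identity is established, everything else reduces to standard garden-hose primitives and the bounded-error guarantee inherited from the BHM protocol; only the polarity of the final XOR needs a little care to ensure that the water exits on the side dictated by $b$.
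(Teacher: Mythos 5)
Your proposal is correct and follows essentially the same route as the paper: share the maximally entangled state of local dimension $2n$, have Alice imprint her input as phases and measure in the Hadamard basis to get $a$ (your $y$), have Bob measure the edge index $k$ together with the relative phase $c'$, and then reduce the remaining classical task to a garden-hose inner-product computation on $O(\log n)$-bit strings with error $1/3$ inherited from the promise. The only (cosmetic) differences are that you merge Bob's two-stage measurement into a single $M$-dependent basis measurement and absorb the correction bit $v'$ into the inner-product instance by padding, where the paper instead has Bob re-route the designated $0$- and $1$-pipes according to his outcome and $w_\ell$.
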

\begin{proof}
  Alice and Bob use $\log(2n)$ EPR pairs as quantum state $\ket{\psi}
  = \frac{1}{\sqrt{2n}}\sum_{i=0}^{2n-1}{\ket{i}\ket{i}}$.  Alice
  inserts her input bits $x=x_0 \ldots x_{2n-1}$ as phases of the
  shared superposition, yielding the shared state
\[
 \frac{1}{\sqrt{2n}}\sum_{i=0}^{2n-1}{(-1)^{x_i}\ket{i}_A\ket{i}_B} \, .
\]
Bob performs the following measurement: he uses projectors $P_k =
\proj{i_k}_B + \proj{j_k}_B$ corresponding to the $n$
edges. As they form a perfect matching, we have
$\sum_{k=1}^n P_k = I$ and $P_k P_{k'} = \delta_{k{k'}} P_k$, so
$\set{P_k}_k$ is a valid orthogonal measurement. Let us denote Bob's
measurement outcome by $\ell$. Setting $i:=i_\ell$ and $j:=j_\ell$,
the post-measurement state is
\[
 (-1)^{x_{i}} \ket{i}_A\ket{i}_B + (-1)^{x_{j}} \ket{j}_A\ket{j}_B \, .
\]
Alice then performs a Hadamard transform $H^{\otimes 2n} \otimes I$ on
her part of the state, resulting in
\[
 \sum_{a=0}^{2n-1}{\ket{a}_A \left[ (-1)^{x_i+a \cdot i}\ket{i}_B +
     (-1)^{x_j + a \cdot j}\ket{j}_B \right]} \, .
\]
Alice measures her register in the computational basis and obtains
outcome $a$. Bob performs a Hadamard gate on basis states $\ket{i}_B$
and $\ket{j}_B$, that is, $H_{i,j} = \frac12 (
  \proj{i}_B+\proj{i}_B+\proj{j}_B-\proj{j}_B )$, resulting in
the state
\begin{align*}
\ket{a}_A & \left( \frac12 \left[ (-1)^{x_i+a \cdot i} + (-1)^{x_j + a \cdot j}
\right] \ket{i}_B \right. \\
& \,\, \left. + \frac 12 \left[ (-1)^{x_i+a \cdot i} - (-1)^{x_j + a \cdot j} \right] \ket{j}_B \right)
\, .
\end{align*}
and measures in the computational basis. 
He gets outcome $i$ if and only if $x_i \oplus a\cdot i = x_j \oplus a
\cdot j$ which is equivalent to $x_i \oplus x_j = a \cdot (i \oplus
j)$. In case $x_i \oplus x_j \neq a \cdot (i \oplus j)$, Bob gets
outcome $j$.

In the garden-hose game played after the measurements, Alice and Bob
perform the garden-hose protocol for the inner-product function
described in~\cite{Speelman11} with $a$ and $i \oplus j$ as their
respective inputs. The protocol can be easily adapted so that at the
end of it, the water will be in one particular pipe (known to Bob) on
Bob's side if $a \cdot (i \oplus j) = 0$, let us call this pipe
$0$-pipe. The water will be in another ``$1$-pipe'' (known to Bob) if
$a \cdot (i \oplus j) = 1$. Furthermore, Bob knows from his second
measurement outcome if they are computing $x_i \oplus x_j$ or $x_i
\oplus x_j \oplus 1$. In the first case, Bob looks at the $\ell$-th
bit of $w$ and leaves the $0$-pipe open if $w_\ell=1$ and routes the
$1$-pipe to Alice, and if $w_\ell=0$ he keeps the $1$-pipe open and
sends back the $0$-pipe. This strategy computes the function value
$w_\ell \oplus x_i \oplus x_j$, with $\ell$ uniformly random in
$\set{1,\ldots,n}$.  The promise guarantees that it gives the correct
value $b$ with probability at least $\frac{2}{3}$. The second case
(when Bob knows that $a \cdot (i \oplus j) \neq x_i \oplus x_j$) is
handled by the ``inverse'' strategy.  \qed
\end{proof}

\begin{theorem}
 $\ghr(\mathrm{NPM}) \in \Omega(\frac{\sqrt{n}}{\log n})$.
\end{theorem}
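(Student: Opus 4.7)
The plan is to invoke Proposition~\ref{prop:commtogh_rand}, which states that any randomized garden-hose protocol with public shared randomness yields a randomized one-way communication protocol of cost at most $\ghr(f)\log(\ghr(f))$. Combined with a known lower bound on the randomized one-way communication complexity of $\mathrm{NPM}$, this immediately yields the desired lower bound on $\ghr(\mathrm{NPM})$.

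More concretely, Gavinsky, Kempe, Kerenidis, Raz and de Wolf~\cite{GKKRdW07} prove that the randomized one-way communication complexity of the (Noisy) Boolean Hidden Matching problem with public shared randomness is in $\Omega(\sqrt{n})$, that is, $R^{1,\mathrm{pub}}_{\eps}(\mathrm{NPM}) \in \Omega(\sqrt{n})$ for small constant error $\eps$. Applying Proposition~\ref{prop:commtogh_rand} to $f = \mathrm{NPM}$, we obtain
\[
\ghr(\mathrm{NPM})\cdot \log\bigl(\ghr(\mathrm{NPM})\bigr) \;\geq\; R^{1,\mathrm{pub}}_{\eps}(\mathrm{NPM}) \;\in\; \Omega(\sqrt{n})\,.
\]
Writing $s := \ghr(\mathrm{NPM})$, the inequality $s \log s \in \Omega(\sqrt{n})$ implies $s \in \Omega(\sqrt{n}/\log n)$: indeed, $s \leq \sqrt{n}$ (otherwise we are already done), so $\log s \leq \tfrac{1}{2}\log n$, and therefore $s \geq \Omega(\sqrt{n})/\log s \geq \Omega(\sqrt{n}/\log n)$.

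The only step that requires care is making sure that the $\Omega(\sqrt{n})$ lower bound from~\cite{GKKRdW07} applies to the precise partial function $\mathrm{NPM}$ as defined in the excerpt (the promise is that $\Delta(M\cdot x\oplus b^n, w)\leq n/3$ for some $b\in\{0,1\}$); this follows from the fact that their lower-bound argument, which uses Fourier-analytic techniques on the input distribution, is known to be robust under the noise level specified by the promise, and in fact was established precisely for this ``noisy'' version of Boolean Hidden Matching. Once this is granted, the rest is a one-line computation. The main potential obstacle is therefore bookkeeping about constants and error probabilities in carrying the communication-complexity lower bound over to our setting, but since we only seek an asymptotic $\Omega(\sqrt{n}/\log n)$ bound, standard error-reduction (amplification via repetition, analogous to Proposition~\ref{prop:amplification}) loses at most a constant factor and does not affect the statement.
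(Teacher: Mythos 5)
Your proposal is correct and follows exactly the paper's route: combine the $\Omega(\sqrt{n})$ lower bound on the (public-coin) one-way communication complexity of NPM from~\cite{GKKRdW07} with Proposition~\ref{prop:commtogh_rand}. The extra step you spell out, converting $s\log s \in \Omega(\sqrt{n})$ into $s \in \Omega(\sqrt{n}/\log n)$, is a routine calculation the paper leaves implicit, and your handling of it is fine.
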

\begin{proof}
Combining the lower bound on the classical one-way communication complexity
from~\cite{GKKRdW07} of $\Omega(\sqrt{n})$ with
Proposition~\ref{prop:commtogh_rand} gives the statement. \qed
\end{proof}

\section{Lower Bounds on Quantum Resources for a Perfect Attack} \label{app:lowerbound}
We show that for a function that is injective for
Alice or injective for Bob (according to
Definition~\ref{def:injective}), the dimension of the state the
adversaries need to handle (including possible quantum communication
between them) in order to attack protocol $\PVqubit$
perfectly has to be of order at least linear in the classical input
size $n$. We start by showing two lemmas. The
actual bound is shown in Section~\ref{sec:actuallowerbound}.

In the last subsection, we show that there exist functions for which
perfect attacks on $\PVqubit$ requires the adversaries to handle a
polynomial amount of qubits.

\subsection{Localized Qubits} \label{sec:localized} Assume we have two
bipartite states $\ket{\psi^0}$ and $\ket{\psi^1}$ with the property
that $\ket{\psi^0}$ allows Alice to locally extract a qubit and
$\ket{\psi^1}$ allows Bob to locally extract the same
qubit. Intuitively, these two states have to be different. 

More formally, we assume that both states consist of five registers
$R,A,\tilde{A},B,\tilde{B}$ where registers $R,A,B$ are one-qubit
registers and $\tilde{A}$ and $\tilde{B}$ are arbitrary. We assume
that there exist local unitary transformations $U_{A \tilde{A}}$
acting on registers $A \tilde{A}$ and $V_{B \tilde{B}}$ acting on $B
\tilde{B}$ such that\footnote{We always assume that these
  transformations act as the identities on the registers we do not
  specify explicitly.}
\begin{align}
U_{A \tilde{A}} \ket{\psi^0}_{RA\tilde{A}B\tilde{B}} &=
\ket{\beta}_{RA} \otimes
\ket{P}_{\tilde{A}B\tilde{B}} \label{eq:unitaryAlice} \\
V_{B \tilde{B}} \ket{\psi^1}_{RA\tilde{A}B\tilde{B}} &=
\ket{\beta}_{RB} \otimes \ket{Q}_{A\tilde{A}\tilde{B}} \, ,
\label{eq:unitaryBob}
\end{align} 
where $\ket{\beta}_{RA} := (\ket{00}_{RA} + \ket{11})_{RA})/\sqrt{2}$
denotes an EPR pair on registers $RA$ and
$\ket{P}_{\tilde{A}B\tilde{B}}$ and $\ket{Q}_{A\tilde{A}\tilde{B}}$
are arbitrary pure states. 

\begin{lemma} \label{lem:innerproduct}
Let $\ket{\psi^0}, \ket{\psi^1}$ be states that
fulfill~\eqref{eq:unitaryAlice} and~\eqref{eq:unitaryBob}. Then,
\[ \big| \, \braket{\psi^0}{\psi^1} \, \big| \leq 1/2 \, . \]
\end{lemma}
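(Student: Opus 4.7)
The plan is to produce a projector $\Pi$ on the joint space that fixes $\ket{\psi^0}$ while acting on $\ket{\psi^1}$ with expected value $1/4$, and then close the argument with the Cauchy--Schwarz inequality applied to $\Pi\ket{\psi^0}$ and $\Pi\ket{\psi^1}$.

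First I would define $\Pi := U_{A\tilde{A}}^\dagger\bigl(\proj{\beta}_{RA} \otimes \I_{\tilde{A}B\tilde{B}}\bigr)U_{A\tilde{A}}$, which is a projector because a unitary conjugate of a projector is again a projector. By~\eqref{eq:unitaryAlice}, $U_{A\tilde{A}}\ket{\psi^0} = \ket{\beta}_{RA}\otimes\ket{P}_{\tilde{A}B\tilde{B}}$ already lies in the range of $\proj{\beta}_{RA}\otimes\I$, so $\Pi\ket{\psi^0}=\ket{\psi^0}$ and in particular $\bra{\psi^0}\Pi\ket{\psi^0}=1$.

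The main computation is $\bra{\psi^1}\Pi\ket{\psi^1}$. The key fact is that on $\ket{\psi^1}$ the register $R$ is maximally mixed: by~\eqref{eq:unitaryBob}, applying $V_{B\tilde{B}}$ (which acts trivially on $RA\tilde{A}$) turns $R$ into one half of an EPR pair with $B$, so tracing out $B\tilde{B}$ yields $\rho^1_{RA\tilde{A}} = (\I/2)_R\otimes\rho^Q_{A\tilde{A}}$ for $\rho^Q_{A\tilde{A}}:=\tr_{\tilde{B}}\proj{Q}$. Conjugating by $U_{A\tilde{A}}$ and then tracing out $\tilde{A}$ leaves a reduced state on $RA$ of the form $(\I/2)_R\otimes\sigma_A$ for some density operator $\sigma_A$, and hence
\[
\bra{\psi^1}\Pi\ket{\psi^1}
 = \tr\bigl[\proj{\beta}_{RA}\cdot\bigl((\I/2)_R\otimes\sigma_A\bigr)\bigr]
 = \tfrac{1}{2}\bra{\beta}(\I_R\otimes\sigma_A)\ket{\beta}
 = \tfrac{1}{2}\cdot\tfrac{\tr\sigma_A}{2}
 = \tfrac{1}{4},
\]
using the elementary identity $\bra{\beta}(\I\otimes\sigma)\ket{\beta} = \tr(\sigma)/2$.

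Finally, since $\Pi^2=\Pi$ and $\Pi\ket{\psi^0}=\ket{\psi^0}$, Cauchy--Schwarz gives
\[
|\braket{\psi^0}{\psi^1}|
 = |\bra{\psi^0}\Pi\ket{\psi^1}|
 \leq \sqrt{\bra{\psi^0}\Pi\ket{\psi^0}}\cdot\sqrt{\bra{\psi^1}\Pi\ket{\psi^1}}
 = \sqrt{1\cdot\tfrac{1}{4}}
 = \tfrac{1}{2},
\]
which is the desired bound. I do not expect any serious obstacle; the most delicate point is the reduction of $\bra{\psi^1}\Pi\ket{\psi^1}$ to the EPR identity above, which just requires tracking how the two local unitaries commute with the partial traces over $\tilde{A}$, $B$ and $\tilde{B}$.
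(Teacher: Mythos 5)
Your proof is correct, and it takes a genuinely different route from the paper's. The paper computes the overlap directly: it rewrites $\braket{\psi^0}{\psi^1}$ as $\bra{\beta}_{RA}\bra{P}U_{A\tilde{A}}V_{B\tilde{B}}^\dag\ket{\beta}_{RB}\ket{Q}$, uses that the two local unitaries commute to absorb them into $\ket{P}$ and $\ket{Q}$, and then observes that the partial inner product $\bra{\beta}_{RA}\ket{\beta}_{RB}=\tfrac12\bigl(\bra{0}_A\ket{0}_B+\bra{1}_A\ket{1}_B\bigr)$ is a norm-one transfer operator scaled by $1/2$, so the whole expression is at most $1/2$ times an overlap of unit vectors. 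You instead argue through the projector $\Pi=U_{A\tilde{A}}^\dag(\proj{\beta}_{RA}\otimes\I)U_{A\tilde{A}}$: the state $\ket{\psi^0}$ lies entirely in its range, while for $\ket{\psi^1}$ the register $R$ is maximally entangled with Bob's side and hence maximally mixed and decoupled from $A\tilde{A}$, which forces $\bra{\psi^1}\Pi\ket{\psi^1}=\tfrac12\bra{\beta}(\I\otimes\sigma_A)\ket{\beta}=\tfrac14$; Cauchy--Schwarz then gives the bound. All the intermediate steps check out (in particular, $\tr_{B\tilde B}$ commutes with $V_{B\tilde B}$-conjugation and $U_{A\tilde A}$-conjugation commutes with $(\I/2)_R$, so the reduced state on $RA$ really is $(\I/2)_R\otimes\sigma_A$). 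Both arguments yield the same tight constant $1/2$; the paper's is shorter, while yours is the monogamy-of-entanglement statement made explicit and is somewhat more robust --- it isolates the quantity $\bra{\psi^1}\Pi\ket{\psi^1}$, which would let the argument degrade gracefully if, say, $\ket{\psi^0}$ only approximately satisfied~\eqref{eq:unitaryAlice}.
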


\begin{proof}
  Multiplying both sides of~\eqref{eq:unitaryAlice} with
  $U_{A\tilde{A}}^\dag$ and multiplying~\eqref{eq:unitaryBob} with
  $V_{B\tilde{B}}^\dag$, we can write
\begin{align*}
  \big| \, \braket{\psi^0}{\psi^1} \, \big| &= \big| \,
  \bra{\beta}_{RA} \bra{P}_{\tilde{A}B\tilde{B}} \; U_{A\tilde{A}} \,
  V_{B\tilde{B}}^\dag \; \ket{\beta}_{RB} \ket{Q}_{A\tilde{A}\tilde{B}} \,
  \big| \\
  &= \big| \, \bra{\beta}_{RA} \bra{P'}_{\tilde{A}B\tilde{B}}
  \ket{\beta}_{RB} \ket{Q'}_{A\tilde{A}\tilde{B}} \, \big|\\
  &= \big| \, \bra{P'}_{\tilde{A}B\tilde{B}} \bra{\beta}_{RA}
  \ket{\beta}_{RB} \ket{Q'}_{A\tilde{A}\tilde{B}} \, \big| \; ,
\end{align*}
where we used that $U_{A\tilde{A}}$ and $V_{B\tilde{B}}$ commute and
defined $\ket{P'}_{\tilde{A}B\tilde{B}} := V_{B\tilde{B}}
\ket{P}_{\tilde{A}B\tilde{B}}$ and $\ket{Q'}_{A\tilde{A}\tilde{B}} := U_{A\tilde{A}}
\ket{Q}_{A\tilde{A}\tilde{B}}$. The last equality is just rearranging
terms that act on different registers.

Note that writing out the partial inner product between
$\ket{\beta}_{RA}$ and $\ket{\beta}_{RB}$ gives 
$$\bra{\beta}_{RA} \ket{\beta}_{RB} = \frac{1}{2} \big(\bra{0}_A \ket{0}_B 
+ \bra{1}_A \ket{1}_B \big) \; ,$$
where the operator in the parenthesis ``transfers'' a qubit from register
$A$ to register $B$.
Hence, 
\begin{align*}
  \big| \, \braket{\psi^0}{\psi^1} \, \big| &= 
\big| \, 
\bra{P'}_{\tilde{A}B\tilde{B}} \frac{1}{2} \big( \bra{0}_A \ket{0}_B + \bra{1}_A \ket{1}_B \big)  \ket{Q'}_{A\tilde{A}\tilde{B}} \,
\big| \\
&= \frac{1}{2} \cdot \big| \, \bra{P'}_{\tilde{A}B\tilde{B}}
\ket{Q'}_{B\tilde{A}\tilde{B}} \, \big|\\
&\leq \frac{1}{2} \; ,
\end{align*}
where the last step follows from the fact that the inner product
between any two unit vectors on the same registers can be at most 1.
\qed \end{proof}

\subsection{Squeezing Many Vectors in a Small Space}\label{sec:squeeze}
For the sake of completeness, we reproduce here an argument similar
to~\cite[Section 4.5.4]{NC00} about covering the state space of
dimension $d$ with patches of radius $\eps$.

\begin{lemma} \label{lem:squeeze}
Let $\mathcal{B}$ be a set of $2^n$ distinct unit vectors in a complex
Hilbert space of dimension $d$, with pairwise absolute inner product
at most $1/2$. Then, the dimension $d$ has to be in $\operatorname\Omega(n)$.
\end{lemma}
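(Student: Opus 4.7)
The plan is to use the standard volume-packing argument hinted at in the lemma statement. The key observation is that the bound $|\langle v_i, v_j\rangle| \le 1/2$ on complex inner products can be converted into a lower bound on the Euclidean distance between the vectors, once we view them as real vectors in $\mathbb{R}^{2d}$. Then disjoint balls around each vector lead to a volume comparison.

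First, I would pass to the real picture: identify $\mathbb{C}^d$ with $\mathbb{R}^{2d}$ and recall that for unit vectors,
\begin{equation*}
\|v_i - v_j\|^2 \;=\; 2 - 2\,\mathrm{Re}\langle v_i,v_j\rangle.
\end{equation*}
Since $|\mathrm{Re}\langle v_i,v_j\rangle| \le |\langle v_i,v_j\rangle| \le 1/2$, we get $\mathrm{Re}\langle v_i,v_j\rangle \le 1/2$, and therefore $\|v_i - v_j\| \ge 1$ for every pair $i \neq j$.

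Next, I would place an open ball $B(v_i, 1/2)$ of radius $1/2$ around each vector. The triangle inequality together with the distance bound $\|v_i - v_j\| \ge 1$ shows that these $2^n$ balls are pairwise disjoint. Moreover, since each $v_i$ is a unit vector, every such ball is contained in the ball $B(0, 3/2)$ around the origin. Comparing volumes in $\mathbb{R}^{2d}$, where every ball of radius $r$ has volume proportional to $r^{2d}$, yields
\begin{equation*}
2^n \cdot \left(\tfrac{1}{2}\right)^{2d} \;\le\; \left(\tfrac{3}{2}\right)^{2d},
\end{equation*}
which simplifies to $2^n \le 3^{2d}$. Taking logarithms gives $n \le 2d\log_2 3$, i.e.\ $d \ge n/(2\log_2 3) \in \Omega(n)$, as required.

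There is no real obstacle here; the main thing to be careful about is that the hypothesis is on the \emph{absolute value} of the complex inner product, so one must check that this still implies the real-part bound needed to lower-bound the Euclidean distance, and that the correct real dimension $2d$ is used in the volume comparison. One could alternatively phrase the argument using an $\varepsilon$-net on the complex projective space or by lower-bounding spherical cap areas, but the Euclidean-ball packing version above is the cleanest way to extract the $\Omega(n)$ bound from $2^n$ vectors with pairwise inner product at most $1/2$.
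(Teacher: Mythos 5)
Your proof is correct and follows essentially the same volume-packing approach as the paper: both derive the pairwise Euclidean distance bound $\|v_i - v_j\| \ge 1$ from the inner-product hypothesis and then pack disjoint regions of radius $1/2$. The only difference is cosmetic --- you pack full balls in $\mathbb{R}^{2d}$ inside the ball of radius $3/2$ and compare volumes, which cleanly yields $2^n \le 3^{2d}$, whereas the paper packs spherical caps on the unit $(2d-1)$-sphere and compares surface areas via the $\operatorname\Gamma$-function formulas; your variant is, if anything, tidier.
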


\begin{proof}
  For any two vectors $\ket{v},\ket{w}$, we can rotate the space such
  that $\ket{v}=\ket{0}$ and $\ket{w}=\cos \theta\ket{0}+\sin
  \theta\ket{1}$ for two orthogonal vectors $\ket{0}$ and
  $\ket{1}$. The \emph{Euclidean distance} between $\ket{v}$ and
  $\ket{w}$ can be expressed as
\begin{align*}
\big| \, \ket{v} - \ket{w} \, \big| &= | (1-\cos \theta) \ket{0} - \sin\theta
\ket{1} | \\
&= \sqrt{(1-\cos\theta)^2 + \sin^2\theta}\\
&= \sqrt{1- 2\cos\theta + \cos^2\theta + \sin^2\theta} \\
&= \sqrt{2} \sqrt{1 - \cos\theta} \, .
\end{align*}
If $\ket{v}$ and $\ket{w}$ have absolute inner product at most 1/2, we
have that $|\cos\theta|\leq 1/2$ and hence $\big| \, \ket{v} - \ket{w}
\, \big| \geq 1$. Therefore, the vectors in $\mathcal{B}$ have
pairwise Euclidean distance at least 1. The set of unit vectors
$\ket{w}$ with Euclidean distance at most $\delta$ from $\ket{v}$ is
called \emph{patch of radius $\delta$ around $\ket{v}$}.  It follows
that patches of radius $1/2$ around every vector in the set
$\mathcal{B}$ do not overlap.

The space of all $d$-dimensional state vectors can be
regarded as the real unit $(2d-1)$-sphere, because the
vector has $d$ complex amplitudes and hence $2d$ real degrees
of freedom with the restriction that the sum of the squared amplitudes
is equal to 1. Notice that the Euclidean distance between complex
vectors $\ket{v}$, $\ket{w}$ remains unchanged if we regard these
vectors as points of the real unit $(2d-1)$-sphere.

The surface area of a patch of radius $1/2$ near any vector is lower
bounded by the volume of a $(2d-2)$-sphere of radius $\eps$ where
$\eps$ is a constant slightly less than 1/2.\footnote{The patch is
  a ``bent'' version of this volume.}.  We use the formula $S_k(r)=2
\pi^{(k+1)/2} r^k/\operatorname\Gamma((k+1)/2)$ for the surface area
of a $k$-sphere of radius $r$, and $V_k(r)=2 \pi^{(k+1)/2}
r^{k+1}/[(k+1)\operatorname\Gamma((k+1)/2)]$ for the volume of a
$k$-sphere of radius $r$. The total surface area of all patches, which
is at least $2^n \cdot V_{2d-2}(\eps)$, is not more than the total
surface of the whole sphere $S_{2d-1}(1)$. Inserting the formulas, we
get
\begin{align*}
2^n \cdot 2 \pi^{d-\frac12}
\frac{\eps^{2d-1}}{(2d-1)\operatorname\Gamma(d-\frac12)}
 \leq 2 \pi^{d} \frac{1}{\operatorname\Gamma(d)}
\end{align*}
Using the fact that
$\frac{\operatorname\Gamma(d-\frac12)}{\operatorname\Gamma(d)} \leq
\frac{1}{d}$, we conclude that 
\begin{align*}
2^n \leq \sqrt{\pi} (2-\frac{1}{d}) \eps^{-(2d-1)} \leq 2 \sqrt{\pi}
\eps^{-(2d-1)} \, .
\end{align*}
As $\eps < 1/2$, we obtain that $d$ has to be in $\operatorname\Omega(n)$.
\qed \end{proof}

\subsection{The Lower Bound}\label{sec:actuallowerbound}
We consider perfect attacks on protocol $\PVqubit$ from
Figure~\ref{fig:PVqubit}. We allow the players one round of
simultaneous quantum communication which we model as follows.  Let
$\ket{\psi}_{RA\tilde{A} A_C B\tilde{B} B_C}$ be the pure state after
Alice received the EPR half from the verifier. The one-qubit register
$R$ holds the verifier's half of the EPR-pair, the one-qubit register
$A$ contains Alice's other half of the EPR-pair, the register
$\tilde{A}$ is Alice's part of the pre-shared entangled state and the
register $A_C$ holds the qubits that will be communicated to Bob. The
registers $B\tilde{B}B_C$ belong to Bob where $B$ holds one qubit and
$\tilde{B}$ is Bob's part of the entangled state and the $B_C$
register will be sent to Alice. We denote by $q_A$ the total number of
qubits in registers $\tilde{A}$ and $A_C$ and by $q_B$ the total
number of qubits in $\tilde{B}$ and $B_C$.  The overall state is thus
a unit vector in a complex Hilbert space of dimension $d :=
2^{2+q_A+1+q_B}$.

In the first step of their attack, Alice performs a unitary transform
$U^x$ depending on her classical input $x$ on her registers
$A\tilde{A}A_C$. Similarly, Bob performs a unitary transform $V^y$
depending on $y$ on registers $B\tilde{B}B_C$. After the application
of these transforms, the communication registers $A_C$ and $B_C$ and
the classical inputs $x$ and $y$ are exchanged. A final unitary
transform (performed either by Alice or Bob) depending on both $x,y$
``unveils'' the qubit either in Alice's register $A$ or in Bob's
register $B$.

\begin{theorem}
  Let $f$ be injective for Bob. Assume that Alice and Bob perform a
  perfect attack on protocol $\PVqubit$. Then, the dimension $d$ of
  the overall state (including the quantum communication) is in
  $\operatorname\Omega(n)$.
\end{theorem}

\begin{proof}
We assume that the player's actions are unitary transforms as
described before the theorem. 

We investigate the set $\mathcal{B}$ of overall states after Bob performed his
operation, but \emph{before} Alice acts on the state. These states
 depend on Bob's input $y \in \set{0,1}^n$,
\begin{align*}
\mathcal{B} := \Set{ V_{B\tilde{B}B_C}^y \ket{\psi}_{RA\tilde{A}A_CB\tilde{B}B_C} }{y \in
  \set{0,1}^n} \, .
\end{align*}
We claim that for any two different $n$-bit strings $y \neq y'$, the
corresponding two vectors $V^y \ket{\psi}$ and $V^{y'} \ket{\psi}$ in
$\mathcal{B}$ have an absolute inner product of at most $1/2$.

Due to the injectivity of $f$, there exists an input $x$ for Alice such that $f(x,y) \neq
f(x,y')$. Applying Alice's unitary transform $U^x$ to both vectors
does not change their inner product, i.e.
\begin{align*}
| \bra{\psi} (V^y)^\dag V^{y'} \ket{\psi} | = | \bra{\psi}
(V^y)^\dag (U^x)^\dag U^x V^{y'} \ket{\psi} | \, .
\end{align*}
As $f(x,y) \neq f(x,y')$, the qubit has to end up on different
sides. Formally, there exist unitary transforms $K_{A \tilde{A} B_C}$
and $L_{B \tilde{B} A_C}$ that ``unveil'' the qubit in register $A$ or
$B$ respectively. Hence, we can apply Lemma~\ref{lem:innerproduct} to
prove the claim that the two vectors $V^y \ket{\psi}$ and $V^{y'}
\ket{\psi}$ have an absolute inner product of at most $1/2$. In
particular, all of the vectors in $\mathcal{B}$ are distinct. Applying
Lemma~\ref{lem:squeeze} yields the theorem.  \qed \end{proof}

\subsection{Functions For Which Perfect Attacks Need a Large Space}
Using similar arguments as above, we can show the existence of
functions for which perfect attacks require polynomially many
qubits.
\begin{theorem}
  For any starting state $\ket{\psi}$ of dimension $d$, there exists a
  Boolean function on inputs $x,y \in \set{0,1}^n$ such that any perfect
  attack on $\PVqubit$ requires $d$ to be exponential in $n$.
\end{theorem}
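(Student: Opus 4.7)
The natural approach is a counting argument in the spirit of Proposition~\ref{prop:expbound}: upper-bound the number of Boolean functions perfectly attackable starting from a fixed $\ket{\psi}$ of dimension $d$, and compare to the $2^{2^{2n}}$ total Boolean functions on $n$-bit inputs. An attack is parameterized by the first-stage unitaries $(U^x)_{x\in\set{0,1}^n}$ and $(V^y)_{y\in\set{0,1}^n}$ acting on the players' local registers; by the monogamy phenomenon exploited in Lemma~\ref{lem:innerproduct}, for each $(x,y)$ at most one side can locally extract the EPR pair from $U^x V^y\ket{\psi}$, so the strategy itself determines the function $f$ it perfectly attacks (if any), up to an obvious equivalence on the final unveiling step.

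To bound the image, I would discretize by a standard $\epsilon$-net $\mathcal{N}_\epsilon$ on the unitary group $U(d)$ of cardinality at most $(C/\epsilon)^{d^2}$ for a universal constant $C$, and round each of the $2\cdot 2^n$ component unitaries of a strategy to its nearest net point. This yields at most $(C/\epsilon)^{2 d^2\cdot 2^n}$ rounded strategies. The key claim is that, for $\epsilon$ a sufficiently small absolute constant, no single rounded strategy can sit within $\epsilon$ (per component) of two genuine perfect attacks computing distinct functions $f\neq f'$: if it did, picking $(x,y)$ with $f(x,y)\neq f'(x,y)$ and applying the triangle inequality forces the two post-strategy states $U^x V^y\ket{\psi}$ to be within $4\epsilon$ in norm, whereas by Lemma~\ref{lem:innerproduct} one state permits Alice-extraction and the other permits Bob-extraction so their inner product is at most $1/2$; these two conclusions contradict each other already for $\epsilon<1/4$.

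Hence the number of perfectly-attackable functions from $\ket{\psi}$ is at most $(C/\epsilon)^{2 d^2\cdot 2^n}$. If every Boolean function on $n$-bit inputs admitted a perfect attack, this would have to cover $2^{2^{2n}}$ functions; taking logarithms yields $d^2\cdot 2^n = \Omega(2^{2n})$, i.e.\ $d = 2^{\Omega(n)}$. Contrapositively, whenever $d$ is sub-exponential in $n$, there must exist some Boolean function $f$ on $n$-bit inputs for which no perfect attack on $\PVqubit$ using $\ket{\psi}$ exists, which is exactly the statement of the theorem.

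The main obstacle I expect is making the ``function-faithful'' discretization step fully quantitative: one must carefully propagate the per-component operator-norm closeness of unitaries into a genuine norm bound on the post-communication states, and then apply Lemma~\ref{lem:innerproduct} on the relevant register partition $A \tilde A B_C$ versus $B \tilde B A_C$ that arises after the classical/quantum exchange. A minor subtlety is that some rounded strategies will correspond to no perfect attack at all (so $f$ is undefined there); these are simply dropped from the count and do not affect the bound.
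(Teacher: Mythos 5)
Your argument is correct and is at its core the same proof the paper gives: discretize the continuum of first-round local actions into finitely many classes, use Lemma~\ref{lem:innerproduct} to argue that two strategies in the same class cannot route the qubit to opposite sides for any input pair, and then compare the resulting count of perfectly-attackable functions against the $2^{2^{2n}}$ total exactly as in Proposition~\ref{prop:expbound}. The one substantive difference is the object you discretize. You take an $\eps$-net of cardinality $(C/\eps)^{d^2}$ on the unitary group $U(d)$ and round each of the $2\cdot 2^n$ local unitaries, so that per-component closeness of operators directly yields $4\eps$-closeness of the combined states $U^xV^y\ket{\psi}$. The paper instead patches the $d$-dimensional state sphere into $\eps^{-\Theta(d)}$ regions and classifies an action $A$ by which patch $A\ket{\psi}$ lands in; it then needs the commutativity of Alice's and Bob's actions plus an angle-addition argument to conclude that same-patch pairs give combined states with inner product at least $1/2$. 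Your exponent $d^2\cdot 2^n$ is slightly weaker than the paper's $d\cdot 2^n$, but both give $d=2^{\Omega(n)}$; in exchange, your threshold $\eps<1/4$ yields a strict contradiction (distance at most $4\eps$ versus distance at least $1$ forced by inner product at most $1/2$), whereas the paper's patch radius produces inner product at least $1/2$ against the Lemma's at most $1/2$, which fails to be an outright contradiction in the boundary case. The caveats you flag (applying the Lemma to the post-exchange register partition $A\tilde{A}B_C$ versus $B\tilde{B}A_C$, and discarding net points that correspond to no perfect attack) are exactly the right ones and are handled implicitly in the paper as well.
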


We believe that the statement with the reversed order of quantifiers
is true as well (but our current proof does not suffice for this
purpose), so that we can guarantee the existence of one particular
function (independent of the starting state) for which perfect attacks
require large states.
\begin{proof}[sketch]
  We consider covering the sphere with $K$ patches of vectors whose
  pairwise absolute inner product is larger than $\frac{\sqrt{3}}{2}$
  (which corresponds to an Euclidean distance of $\eps = \sqrt{2}
  \sqrt{1+\sqrt{3}/2} \approx 0.52$).  This partitioning also induces
  a partitioning on all possible unitary operations of Alice and Bob.
  We say that two actions $A$ and $A'$ are in the same patch if they
  take the starting state $\ket{\psi}$ to the same patch. In other
  words, if two actions are in the same patch then
\[
\bigl| \bra{\psi}A'^\dagger A \ket{\psi} \bigr| \geq \frac{\sqrt{3}}{2} \text{.}
\]
\noindent \emph{Claim.}  Given two actions of Alice $A,A'$ coming from
the same patch $i$, and two actions of Bob $B,B'$ coming from the same
patch $j$, the inner product between $B A \ket{\psi}$ and $B' A'
\ket{\psi}$ has magnitude at least $\frac{1}{2}$.
\begin{proof}[of the claim]
  Since Alice and Bob act on different parts of the state, their
  actions commute.  Write $\ket{\psi_A} := A'^\dagger A \ket{\psi}$
  and $\ket{\psi_B} := B^\dagger B' \ket{\psi}$.  Then the inner
  product can be written as
\[
\bra{\psi}A'^\dagger B'^\dagger B A \ket{\psi} =  \bra{\psi}B'^\dagger B A'^\dagger A \ket{\psi} = \braket{\psi_B}{\psi_A}
\]
Note that
\[
\bigl| \braket{\psi}{\psi_A} \bigr| = \bigl| \bra{\psi} A'^\dagger A \ket{\psi} \bigr| \geq \frac{\sqrt{3}}{2}\text{,}
\] so the angle $\theta$ between $\ket{\psi_A}$ and $\ket{\psi}$ is at
most $\arccos{\frac{\sqrt{3}}{2}} = \frac{\pi}{6}$. The same holds for
the angle between $\ket{\psi_B}$ and $\ket{\psi}$.  We can upper bound
the total angle between $\ket{\psi_A}$ and $\ket{\psi_B}$ by the sum
of these angles, giving a total angle of at most $\frac{\pi}{3}$.
This corresponds to a lower bound on the inner product of
$\cos{\frac{\pi}{3}}=\frac{1}{2}$. \qed
\end{proof}

So there exists no pair of combined actions $AB$ and $A'B'$, with $A$
and $A'$ in patch $i$ and $B$ and $B'$ in patch $j$, such that the
qubit ends up on Alice's side for $AB$ and on Bob's side for
$A'B'$. Therefore, the combination of $i$ and $j$ completely
determines the destination of the qubit and hence the output of the
function. 
If $K$ denotes the number of patches, then there are $K^{2^n}$
possible strategies for Alice and $K^{2^n}$ possible strategies for
Bob. Hence, the number of combined strategies (possibly resulting in
different functions) is at most $K^{2\cdot2^{n}}$.

It is shown in~\cite[Section 4.5.4]{NC00} that we need
at least $K=\operatorname\Omega(\frac{1}{\eps^{d-1}})$ patches.  Using the
same counting argument as in Proposition~\ref{prop:expbound}, we have
that
\begin{align*}
2^{2^{2n}} \geq \operatorname\Omega \left(\frac{1}{\eps^{(d-1)2\cdot 2^n}}\right) \, ,
\end{align*}
from which follows that for some function, $d$ has to be exponential
in $n$.
\qed \end{proof}

\bibliographystyle{alpha}
\bibliography{library}

\end{document}